%
%

\documentclass[12pt]{paper}

\usepackage{amsmath}
\usepackage{amsfonts}
\usepackage{latexsym}
\usepackage{amssymb}
\usepackage{amsthm}
\usepackage{pb-diagram}

\usepackage{lineno,xcolor}
\setlength\linenumbersep{15pt}

\textwidth = 15.00cm
\textheight = 20.00cm
\topmargin = 0.12in
\oddsidemargin = 0.12in
\evensidemargin = 0.12in
\setlength{\parindent}{10pt}
\setlength{\parskip}{5pt plus 2pt minus 1pt}

\makeatletter

\setcounter{section}{0}
\newtheorem{theorem}{Theorem}[section]
\newtheorem{proposition}[theorem]{Proposition}
\newtheorem{corollary}[theorem]{Corollary}
\newtheorem{lemma}[theorem]{Lemma}

\newtheorem{remark}[theorem]{Remark}


%
%
\def\1{\mathbf 1}

\def\a{\mathbf a}
\def\b{\mathbf b}

\def\z{\mathbf z}
\def\0{\mathbf 0}
\def\cA{\mathcal A}
\def\cB{\mathcal B}
\def\cC{\mathcal C}
\def\cD{\mathcal D}

\def\cF{\mathcal F}

\def\cL{\mathcal L}

\def\cN{\mathcal N}
\def\cK{\mathcal K}

\def\cX{\mathcal X}

\def\cS{\mathcal S}

\def\bZ{{\mathbb Z}}

\def\PG{{\rm PG}}
\def\PGL{{\rm PGL}}

\def\GF{{\rm GF}}
\def\GL{{\rm GL}}

\def\SL{{\rm SL}}
\def\PSL{{\rm PSL}}
\def\Sp{{\rm Sp}}

\def\Fq{\mathbb F_{q}}

\def\Fqm{\mathbb F_{q^m}}

\def\Fqthree{\mathbb F_{q^3}}

\def\Aut{{\rm Aut}}

\def\diag{{\rm diag}}
\def\dim{{\rm dim}}

\def\End{{\rm End}}

\def\GF{{\rm GF}}
\def\Im{{\rm Im}}

\def\Rad{{\rm Rad}}
\def\rank{{\rm rk}}

\def\Tr{{\rm Tr}}

\newcommand\comment[1]{}

\def\<{\langle}
\def\>{\rangle}


\title{Non-linear maximum rank distance codes in the cyclic model for the field reduction of finite geometries}

\author{N. Durante and A. Siciliano}

\begin{document}

\maketitle

\begin{abstract}
In this paper we construct  infinite families of non-linear maximum rank distance  codes by using the setting of bilinear forms of a finite vector space.  We also give a geometric description of such codes by using the cyclic model for the field reduction of finite geometries and we show that these families contain the non-linear maximum rank distance  codes recently provided by Cossidente, Marino and Pavese.
\end{abstract}
 


%
%
%
%
%
%
%

\section{Introduction}


Let $M_{m,m'}(\Fq)$,  $m\le m'$, be the rank metric space  of all the $m\times m'$ matrices with entries in the finite field $\Fq$ with $q$ elements, $q=p^h$, $p$ a prime. The {\em distance} between two matrices  by definition  is the rank of their difference. 
An  $(m,m',q;s)${\em -rank distance code} (also {\em rank metric code}) is any subset $\cX$ of $M_{m,m'}(\Fq)$ such that the  minimum  distance between two of its distinct elements is $s+1$.  
An $(m,m',q;s)$-rank distance code is said to be {\em linear} if it is a linear subspace of $M_{m,m'}(\Fq)$. 
 
  It is known \cite{ds} that the size of an  $(m,m',q;s)$-rank distance code  $\cX$ is bounded by  the {\em Singleton-like bound}: 
\[
|\cX| \le q^{m'(m-s)}.
\]
When this bound is achieved, $\cX$ is called  an  $(m, m', q; s)${\em -maximum rank distance code}, or $(m, m', q; s)$-{\em MRD code}, for short.

Although MRD  codes are very interesting by their own and they caught the attention of many researchers  in recent years \cite{alr,ckww,rav}, such codes  have also   applications in error-correction for random network
coding  \cite{gpt,kk,skk}, space-time coding \cite{tsc}  and  cryptography \cite{gpt2,sk}.

Obviously, investigations of MRD codes can be carried out in any rank metric space isomorphic to $M_{m,m'}(\Fq)$.  
In his pioneering paper \cite{ds}, Ph. Delsarte  constructed linear MRD codes for all the possible values of the parameters $m$, $m'$, $q$ and  $s$ by using the framework  of  bilinear forms on two  finite-dimensional vector spaces over a finite  field (Delsarte used the terminology {\em Singleton systems} instead of maximum rank distance codes).

Few years later, Gabidulin   \cite{gab} independently  constructed Delsarte's linear MRD codes as 
evaluation codes of linearized polynomials over a finite field \cite{ln}. That  construction was generalized in \cite{kg} and these codes  are now known as {\em Generalized Gabidulin codes}.

In the case $m'=m$,  a different construction of Delsarte's MRD codes was given by  Cooperstein   \cite{coop}  in the framework of the tensor product of a vector space over $\Fq$ by itself. 
Very recently,   Sheekey   \cite{she} and  Lunardon,  Trombetti and Zhou \cite{ltz} provide some new linear MRD codes by using   linearized polynomials over $\Fqm$.

In finite geometry, $(m,m,q;m-1)$-MRD codes are known as {\em spread sets} \cite{d}.
 To the extent of our knowledge the only
 non-linear MRD codes  that are not spread sets are the $(3,3,q;1)$-MRD codes  constructed   by  Cossidente, Marino and Pavese in \cite{cmp}.  They got such  codes by looking at the geometry of certain algebraic curves of  the projective plane $\PG(2,q^3)$. Such curves, called   {\em  $C_F^1$-sets},  were   introduced and studied by Donati and Durante in \cite{dd}. 
In this paper,  we construct infinite families of non-linear $(m,m,q;m-2)$-MRD codes, for $q\ge 3$ and $m\ge 3$. We also  show that  the Cossidente, Marino and Pavese non-linear MRD  codes belong to these families.
Our  investigation will carry  out in the framework of bilinear forms on a finite dimensional vector space  over $\Fq$.
 
  Let $\Omega=\Omega(V,V)$ be the set of all bilinear forms  on $V$, where $V=V(m,q)$ denotes an $m$-dimensional vector space over $\Fq$. Clearly, $\Omega$ is an $m^2$-dimensional vector space over $\Fq$.
  
The {\em left radical } $\Rad(f)$ of any $f\in\Omega$   by definition is  the subspace of $V $ consisting of all vectors $v$ satisfying $f(v,v' )=0$ for every $v' \in V $. The {\em rank} of $f$ is the codimension of $\Rad(f)$, i.e.
\[
\rank(f)=m-\dim_{\Fq}(\Rad(f)).
\]

Let $u_1,\ldots,u_{m}$   be a basis of $V$. For a given $f\in\Omega$,  the matrix  $(f(u_i,u_j))_{i,j=1,\ldots,m}$, is called the {\em matrix } of $f$ in the basis $u_1,\ldots, u_{m}$ and the map 
\[
\begin{array}{rccc}
\nu=\nu_{\{u_1,\ldots, u_{m}\}}:& \Omega & \rightarrow &  M_{m,m}(\Fq)\\
 & f & \mapsto  & (f(u_i,u_j))_{i,j=1,\ldots,m}
\end{array}
\]
 is an isomorphism of rank metric spaces giving $\rank(f)=\rank(\nu(f))$.

 The group $H=\GL(V)\times \GL(V)$ acts  on $\Omega$ as  a subgroup of $\Aut_{\Fq}(\Omega)$:  for every $(g,g')\in H$, the $(g,g')-$image of any $f\in\Omega$ is defined to be the bilinear form $f^{(g,g')}$ given by 
 \[
 f^{(g,g')}(v,v')=f(gv,g'v').
 \]

Any $\theta\in\Aut(\Fq)$ naturally defines a semilinear transformation of $V$.   For any $f\in\Omega$ and $\theta\in\Aut(\Fq)$, we can define the bilinear form $f^{\theta}(v,v')=f(v^{\theta^{-1}},{v'}^{\theta^{-1}})^\theta$.

The involutorial operator $\top:f\in\Omega\rightarrow f^\top\in\Omega$, where $f^\top$ is given by
 \[
 f^\top(v,v')=f(v',v),
 \]
 is an automorphism of $\Omega$. It turns out that the above automorphisms are all the elements in  $\Aut_{\Fq}(\Omega)$, i.e.  $\Aut_{\Fq}(\Omega)=(\GL(V)\times\GL(V))\rtimes \<\top\>\rtimes\Aut(\Fq)$. 

Two MRD codes $\cX_1$ and $\cX_2$ are said to be {\em equivalent} if there exists $\varphi\in
\Aut_{\Fq}(\Omega)$ such that $\cX_2=\cX_1^\varphi$.

This paper is organized as follows.
In Section \ref{sec_3} we introduce a cyclic model of $\Omega$. In this model we construct infinite families of non-linear MRD codes.
More precisely, for   $q\ge3$,   $m\ge 3$ and $I$  any subset of $\Fq\setminus \{0,1\}$, we provide a subset $\cF_{m,q;I}$ of $\Omega$ which turns out to be  a non-linear $(m,m,q;m-2)$-MRD code (Theorem \ref{th_1}).

In Section \ref{sec_4} we give a geometric description of such codes. If a given rank distance code $\cX$ is considered as a subset of $V(m^2,q)$, then one can consider the corresponding set of projective points in $\PG(m^2-1,q)$ under the canonical homomorphism $\psi:\GL(V(m^2,q))\rightarrow \PGL(m^2,q)$.  We prove (Theorem \ref{th_5}) that the projective set defined by $\cF_{m,q;I}$, with $|I|=k$, is a subset of a Desarguesian $m$-spread of $\PG(m^2-1,q)$ \cite{segre} consisting of two spread elements,   $k$ pairwise disjoint Segre varieties $\cS_{m,m}(\Fq)$ \cite{ht} and $q-1-k$ hyperreguli \cite{ost}.
 Additionally, if one consider the projective space $\PG(m^2-1,q)$ as the field reduction of $\PG(m-1,q^m)$ over $\Fq$, then the projective set defined by $\cF_{m,q;I}$ is, in fact, the field reduction of the union of two projective points,  $k$  mutually disjoint $(m-1)$-dimensional $\Fq$-subgeometries and $q-1-k$ scattered $\Fq$-linear sets of pseudoregulus type  of $\PG(m-1,q^m)$ \cite{dd,lvdv,mpt}. The main tool we use to get the above geometric description is the field reduction of $V(m,q^m)$ over $\Fq$ in the cyclic model for the tensor product $\Fqm\otimes V$   as described in \cite{coop}. 


\comment{In addition to  the applications  of the cyclic model for a  vector space over  $\Fq$ contained in this note, the authors strongly believe that  cyclic representation of (field reduction of) finite geometries is a useful tool to study other geometric objects as $\Fq$-linear sets and semifields and that the equivalence among them can be managed through Dickson matrix  representation of endomorphisms.
}

%
%
%
%
%

\section{The   non-linear MRD codes in the cyclic model of bilinear forms}\label{sec_3}

In the paper \cite{coop}, the cyclic model of the $m$-dimensional vector space $V=V(m,q)$  over $\Fq$ was introduced by taking  eigenvectors, say $v_1,\ldots, v_m$, of a given Singer cycle $\sigma$ of $V$, where a {\em Singer cycle} of $V$ is an element of $\GL(V)$ of order $q^m-1$.  Since the vectors $v_1,\ldots, v_m$ have distinct eigenvalues over $\Fqm$, they form a basis of  the extension $ \widehat V=V(m,q^m)$ of $V$.  In this basis  the vector space $V$ is represented by  
\begin{equation}\label{eq_53}
V=\left\{\sum_{j=1}^{m}{a^{q^{j-1}}v_j}:a \in \Fqm\right\}.
\end{equation}
We call $v_1,\ldots, v_m$ a {\em Singer basis} of $V$ and the above representation is  called the {\em cyclic model for} $V$ \cite{h1,fkmp}. 

 The set of all  $1-$dimensional $\Fq-$subspaces of $\widehat V$ spanned by  vectors in the cyclic model for $V$ is called the {\em cyclic model for  the projective space} $\PG(V)$. Note that the  above cyclic model  corresponds to the cyclic model of $\PG(V)$  where the points are identified with the elements of the group $\bZ_{q^{m-1}+q^{m-2}+\cdots+q+1}$ \cite[pp. 95--98]{h1}  \cite{fkmp}. Very recently, the cyclic model for $V(3,q)$ has been used to give an alternative model for  the  triality quadric $Q^+(7,q)$ \cite{bl}.

Let $\widehat V^*$ be the dual vector space of $\widehat V$ with basis 
 $v^*_1,\ldots,v^*_m$,  the dual basis of the Singer basis $v_1,\ldots,v_m$.
Then the dual vector space of $V$ is
\[
V^*=\left\{\sum_{i=1}^{m}{\alpha^{q^{i-1}} v_i^*}:\alpha\in\Fqm\right\}.
\]

A linear transformation from $V$ to itself is called an {\em endomorphism} of $V$. We will denote the set of all endomorphisms of $V$ by $\End(V)$.

  An $m\times m$ {\em Dickson matrix} (or $q${\em-circulant matrix}) over $\Fqm$ is a matrix of the form
 \[
D_{(a_0,a_1,\ldots,a_{m-1})}= \begin{pmatrix}
a_0           & a_1           & \cdots  & a_{m-1}\\
a^q_{m-1}     & a^q_{0}       & \cdots  & a^q_{m-2}\\
\vdots        & \vdots        &  \ddots & \vdots\\
a^{q^{m-1}}_{1} & a^{q^{m-1}}_{2} & \cdots  & a^{q^{m-1}}_0
\end{pmatrix}
 \]
with $a_i\in \Fqm$. We say that the above matrix is {\em generated  by the array} $(a_0,a_1,\ldots,a_{m-1})$.

Let $\cD_m(\Fqm)$ denote the {\em Dickson matrix algebra} formed by all $m\times m$ Dickson matrices over $\Fqm$. The set $\cB_m(\Fqm)$ of all invertible Dickson $m\times m$ matrices is known as the {\em Betti-Mathieu group} \cite{car}.

\begin{proposition}\label{prop_3}\cite[Lemma 4.1]{wl}
 $\End(V)\simeq\cD_m(\Fqm)$ and    $\GL(V)\simeq \cB_m(\Fqm)$.

\end{proposition}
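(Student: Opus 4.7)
The plan is to identify $V$ with $\Fqm$ as $\Fq$-vector spaces via the cyclic model, transport $\End(V)$ over to the ring of linearised polynomials, and then read off the matrix of such a polynomial in the basis $v_{1},\ldots,v_{m}$ of $\widehat V$. The Dickson shape of the matrix will drop out of a Frobenius-index computation, and the statement about $\GL(V)$ will be an immediate corollary once a ring isomorphism is established.

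More concretely, the first step is to use the representation \eqref{eq_53} to exhibit an $\Fq$-linear isomorphism
\[
\iota:\Fqm\longrightarrow V,\qquad a\longmapsto\sum_{j=1}^{m}a^{q^{j-1}}v_{j},
\]
which identifies $\End(V)$ with $\End_{\Fq}(\Fqm)$. I would then invoke the standard description of $\Fq$-endomorphisms of $\Fqm$ as the ring of $q$-polynomials
\[
L(x)=\sum_{k=0}^{m-1}a_{k}x^{q^{k}},\qquad a_{k}\in\Fqm,
\]
taken modulo $x^{q^{m}}-x$, with multiplication given by composition.

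The heart of the proof is the matrix computation. Given $\phi\in\End(V)$ corresponding via $\iota$ to the $q$-polynomial $L$, extend $\phi$ $\Fqm$-linearly to $\widehat V$ and let $M=(m_{ij})$ be its matrix in the basis $v_{1},\ldots,v_{m}$. For any $v=\sum_{j}a^{q^{j-1}}v_{j}\in V$ one has
\[
\phi(v)=\sum_{i}\Bigl(\sum_{j}m_{ij}a^{q^{j-1}}\Bigr)v_{i}=\sum_{i}L(a)^{q^{i-1}}v_{i},
\]
where the second equality is forced by $\phi(v)\in V$ together with $\iota(L(a))=\phi(v)$. Expanding
\[
L(a)^{q^{i-1}}=\sum_{k=0}^{m-1}a_{k}^{q^{i-1}}a^{q^{k+i-1}}
\]
and using $a^{q^{m}}=a$ to reindex $k\equiv j-i\pmod{m}$, one reads off $m_{ij}=a_{j-i\bmod m}^{q^{i-1}}$, which is exactly the $(i,j)$-entry of $D_{(a_{0},a_{1},\ldots,a_{m-1})}$. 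The assignment $\phi\mapsto M$ is therefore well defined with image in $\cD_{m}(\Fqm)$, and surjectivity is obtained by reading the same calculation backwards: any array $(a_{0},\ldots,a_{m-1})\in\Fqm^{m}$ yields a $q$-polynomial $L$ whose associated endomorphism has matrix $D_{(a_{0},\ldots,a_{m-1})}$. Because the basis $v_{1},\ldots,v_{m}$ is a genuine basis of $\widehat V$, the map $\phi\mapsto M$ respects composition; hence it is a ring isomorphism $\End(V)\simeq\cD_{m}(\Fqm)$, and restricting to units gives $\GL(V)\simeq\cB_{m}(\Fqm)$.

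The only delicate point is the index bookkeeping showing that the exponents $q^{k+i-1}$ of $a$ collapse, modulo $a^{q^{m}}=a$, to the cyclic Dickson pattern; once this is dispatched, the rest of the argument is a routine translation between the polynomial and matrix pictures.
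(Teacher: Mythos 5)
The paper offers no proof of this proposition: it is imported directly from \cite[Lemma 4.1]{wl} without argument. Your write-up is a correct, self-contained derivation, and it is the natural one given the surrounding text: identify $V$ with $\Fqm$ through the cyclic model (\ref{eq_53}), represent $\Fq$-endomorphisms of $\Fqm$ by $q$-polynomials, and read off the matrix of the unique $\Fqm$-linear extension to $\widehat V$ in the Singer basis. Your index bookkeeping $m_{ij}=a_{(j-i)\bmod m}^{\,q^{i-1}}$ matches the paper's Dickson pattern exactly (compare the formula $f_{\a}(v_i,v_j)=a_{m-i+j}^{q^{i-1}}$ in the proof of Proposition \ref{prop_12}), so your computation is also the one the authors implicitly rely on when they say that any Singer basis ``realizes'' the isomorphism $\varphi$. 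Two small points deserve an explicit sentence to make the argument airtight: first, the $\Fqm$-linear extension of $\phi$ to $\widehat V$ exists and is unique because $V$ contains an $\Fqm$-spanning set of $\widehat V$; second, the identity $\sum_{j}m_{ij}a^{q^{j-1}}=\sum_{k}a_k^{q^{i-1}}a^{q^{k+i-1}}$, valid for all $a\in\Fqm$, really does pin down the entries $m_{ij}$, because the monomials $x,x^{q},\ldots,x^{q^{m-1}}$ are $\Fqm$-linearly independent as functions on $\Fqm$ (a nonzero $q$-polynomial of degree less than $q^{m}$ cannot have $q^{m}$ roots). With those remarks in place, multiplicativity of $\phi\mapsto M$ gives the $\Fq$-algebra isomorphism $\End(V)\simeq\cD_m(\Fqm)$, and restriction to units gives $\GL(V)\simeq\cB_m(\Fqm)$; nothing further is needed.
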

%
%

A polynomial  of the form
\[
L(x)=\sum_{i=0}^{m-1}{\alpha_ix^{q^i}},\ \ \ \ \alpha_i\in\Fqm,
\]
 is called a {\em linearized polynomial} (or {\em q-polynomial}) over $\Fqm$. It is known that every endomorphism  of $\Fqm$ over $\Fq$ can be represented by a unique $q-$polynomial \cite{roman}.

Let $\cL_m(\Fqm)$ be the set of  all  $q$-polynomials over $\Fqm$. 
In the paper \cite{wl}, it was showed that  the map
\[
\begin{array}{cccc}
\varphi: & \cL_m(\Fqm) & \longrightarrow & \cD_m(\Fqm)\\
  &\sum_{i=0}^{m-1}{\alpha_ix^{q^i}} & \longmapsto & D_{(\alpha_0,\ldots ,\alpha_{m-1})}
\end{array}
\]
is an isomorphism between the non-commutative $\Fq-$algebras $\cL_m(\Fqm)$ and $\cD_m(\Fqm)$.   
 From Proposition \ref{prop_3} we see that any Singer basis  of $V$  realizes this  isomorphism.

\begin{proposition}\label{prop_12}
Let $v_1,\ldots,v_n$ be a Singer basis of $V$. Then the matrix of any $f\in\Omega$ with respect to $v_1,\ldots,v_n$ is  an $m\times m$  Dickson matrix. Conversely, every  $m\times m$ Dickson matrix defines a bilinear form on $V\times V$.
\end{proposition}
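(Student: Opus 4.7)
The plan is to pass to the $\Fqm$-linear extension $\widehat{V} = V \otimes_{\Fq} \Fqm$ and exploit Galois descent along $\Gal(\Fqm/\Fq)$. Every $f \in \Omega$ extends uniquely to an $\Fqm$-bilinear form $F : \widehat{V} \times \widehat{V} \to \Fqm$, and since the Singer basis vectors live in $\widehat V$ rather than in $V$ itself, the matrix in question is to be interpreted as $M = (F(v_i, v_j))_{i,j=1,\ldots,m}$. The Dickson structure of $M$ will come from understanding how the Galois generator $\Phi$, extended to a $\Phi$-semilinear automorphism of $\widehat V$ with fixed subspace $V$, acts on the $v_j$.

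The key intermediate step is to show that $\Phi$ cyclically permutes the Singer basis: $\Phi(v_j) = v_{j+1}$ with indices read modulo $m$. The idea is to write a generic element of $V$ in the form $\sum_{j=1}^m a^{q^{j-1}} v_j$, apply $\Phi$ semilinearly, and use $\Phi|_V = \mathrm{id}$ to obtain
\[
\sum_{j=1}^m a^{q^j}\,\Phi(v_j) \;=\; \sum_{j=1}^m a^{q^{j-1}} v_j \qquad \text{for every } a \in \Fqm.
\]
Expanding $\Phi(v_j) = \sum_k \mu_{jk} v_k$, matching coefficients of each $v_k$, and appealing to the uniqueness of the $q$-polynomial representation of additive maps on $\Fqm$ (equivalently, to nonvanishing of the Moore determinant of any $\Fq$-basis of $\Fqm$) forces $\mu_{jk} = \delta_{j,\,k-1 \bmod m}$. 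This is the only step requiring genuine computation and is where I expect the main technical obstacle.

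Once the cyclic action of $\Phi$ is in place, the rest is formal. The condition that $f$ is $\Fq$-valued on $V$ is equivalent, by Galois descent, to the equivariance $\Phi(F(w,w')) = F(\Phi w, \Phi w')$ for all $w, w' \in \widehat V$: both sides are $\Phi$-semilinear biadditive maps, hence agree on $\widehat V \times \widehat V$ iff they agree on the $\Phi$-fixed part $V \times V$, where the identity is trivial. Evaluating this equivariance on the Singer basis gives $F(v_{i+1}, v_{j+1}) = F(v_i, v_j)^q$, and iterating produces $F(v_i, v_j) = a_{(j-i) \bmod m}^{q^{i-1}}$ with $a_k := F(v_1, v_{k+1})$, which is precisely the matrix $D_{(a_0, \ldots, a_{m-1})}$. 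The converse is immediate: any Dickson matrix satisfies $D_{i+1, j+1} = D_{ij}^q$, so defining $F$ on $\widehat V$ by $F(v_i, v_j) := D_{ij}$ (extended $\Fqm$-bilinearly) satisfies the Galois equivariance, and its restriction to $V \times V$ is therefore $\Fq$-valued, yielding a bilinear form in $\Omega$.
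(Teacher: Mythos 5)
Your proof is correct, but it takes a genuinely different route from the paper's. The paper argues in the opposite direction: it starts from an arbitrary Dickson matrix $D_{\a}$, defines $f_{\a}(v_i,v_j)=a_{m-i+j}^{q^{i-1}}$, and verifies by direct computation that on $V\times V$ this form equals $\Tr(L_{\a}(x')x)$, hence is $\Fq$-valued and $\Fq$-bilinear; the forward direction (every $f\in\Omega$ has a Dickson matrix) is then obtained not structurally but by a counting argument, since $|\cD_m(\Fqm)|=q^{m^2}=|\Omega|$ and the assignment $D_{\a}\mapsto f_{\a}$ is injective. You instead isolate the structural reason for the Dickson shape: the $q$-Frobenius $\Phi$ with fixed space $V$ cyclically permutes the Singer basis (your coefficient-matching argument via linear independence of $a\mapsto a^{q^k}$ is exactly right, and is essentially the same computation that underlies the cyclic model itself), and $\Fq$-valuedness of $f$ on $V\times V$ is equivalent to $\Phi$-equivariance of the $\Fqm$-bilinear extension $F$, which on the basis reads $F(v_{i+1},v_{j+1})=F(v_i,v_j)^q$ --- precisely the Dickson condition. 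This proves both directions at once with no counting, and makes transparent why $q$-circulant matrices appear; what it does not produce is the explicit formula $f_{\a}(v,v')=\Tr(L_{\a}(x')x)$, which the paper's computation yields as a by-product and then reuses (Remark \ref{rem_5}) to identify these forms with Delsarte's bilinear forms, and which also underlies the description of $\Rad(f_{\a})$ as the kernel of $L_{\a}$ in Remark \ref{rem_3}. If you adopt your argument you should still record that formula separately, and you should make explicit (as you do) that ``the matrix of $f$ with respect to $v_1,\ldots,v_m$'' refers to the unique $\Fqm$-bilinear extension of $f$ to $\widehat V\times\widehat V$, since the $v_i$ do not lie in $V$.
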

  \begin{proof}
Let $D_\a$ be an $m\times m$ Dickson  matrix generated by the $m$-ple $\a=(a_0,a_1,\ldots,a_{m-1})$ over $\Fqm$. Let $f_\a$  be the bilinear mapping on $ \widehat V\times  \widehat V$ defined by
\[
f_\a(v_i,v_j)=a_{m-i+j}^{q^{i-1}} \ \ \  \mathrm{for \ }i,j=1,\ldots,m
\]
where subscripts are taken modulo $m$, 
and then extended over $\widehat V$ by linearity.  Set  $L_{\a}(x)=\sum_{i=0}^{m-1}{a_ix^{q^i}}$ and let  $\Tr$ denote  the trace function from $\Fqm$ onto $\Fq$: 
\[
\Tr:y\in \Fqm \rightarrow \Tr(y)=\sum_{j=0}^{m-1}{y^{q^j}}\in \Fq.
\]

It is easily seen that the action  of $f_\a$ on $V\times V$ is given by
\begin{equation}\label{eq_2}
f_{\a}(v,v')=f_\a(x,x')=\Tr(L_{\a}(x')x),
\end{equation}
with 
$v=\sum_{i=1}^{m}{x^{q^{i-1}}}v_i, v'=\sum_{j=1}^{m}{x'^{q^{j-1}}}v_j$,    
which is a  bilinear form on $V\times V$.  The assertion follows from consideration on the size of $\cD_m(\Fqm)$. 
\end{proof}

For any  $m$-ple $\a=(a_0,\ldots,a_{m-1})$ over $\Fqm$, $f_{\a}$ will denote the bilinear form  having matrix $D_\a$ in the Singer basis $v_1,\ldots, v_m$. For any set $\cA$ of $m-$ples over $\Fqm$ we put 
\[
\cF_\cA=\{f_\a\in\Omega:\a\in \cA\}.
\]

\begin{corollary}\label{cor_3}
Let $\a=(a_0,\ldots, a_{m-1})$. Then
\begin{equation}\label{eq_60} 
\begin{array}{rccc}
\nu_{\{v_1,\ldots, v_{m}\}}:& \Omega & \rightarrow &  \cD_m(\Fqm)\\
 & f_\a & \mapsto  & D_{(a_0,\ldots, a_{m-1})}
\end{array}
\end{equation}
is an isomorphism of rank metric spaces giving $\rank(f_\a)=\rank(D_{(a_0,\ldots, a_{m-1})})$.
\end{corollary}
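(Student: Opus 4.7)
The plan is to deduce the corollary directly from Proposition \ref{prop_12}, together with the general fact recalled in the introduction that for every basis $u_1,\ldots,u_m$ of $V$ the matrix map $\nu_{\{u_1,\ldots,u_m\}}:\Omega\to M_{m,m}(\Fq)$ is an isomorphism of rank metric spaces satisfying $\rank(f)=\rank(\nu(f))$.

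First I would specialize this matrix map to the Singer basis $v_1,\ldots,v_m$. Proposition \ref{prop_12} attaches to every $m$-tuple $\a=(a_0,\ldots,a_{m-1})\in\Fqm^m$ a bilinear form $f_\a\in\Omega$ whose $(i,j)$-entry in this basis is $a_{m-i+j}^{q^{i-1}}$, with subscripts read modulo $m$. A direct comparison of this entry with the $(i,j)$-entry of $D_{(a_0,\ldots,a_{m-1})}$ (namely $a_{j-i\bmod m}^{q^{i-1}}$) shows that $\nu_{\{v_1,\ldots,v_m\}}(f_\a)=D_{(a_0,\ldots,a_{m-1})}$. Since Proposition \ref{prop_12} further guarantees that every element of $\Omega$ arises as an $f_\a$ for some $\a$ (a dimension count $|\cD_m(\Fqm)|=q^{m^2}=|\Omega|$ confirms bijectivity), the map in (\ref{eq_60}) is a well-defined $\Fq$-linear bijection of $\Omega$ onto $\cD_m(\Fqm)$.

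Finally, since the ambient matrix map $\nu_{\{v_1,\ldots,v_m\}}$ preserves the rank by the general property recalled above, its corestriction to $\cD_m(\Fqm)$ does as well, yielding $\rank(f_\a)=\rank(D_{(a_0,\ldots,a_{m-1})})$. No real obstacle is expected here: the argument is essentially bookkeeping, and the only mild point to watch is aligning the cyclic convention for Dickson-matrix indices with the formula $f_\a(v_i,v_j)=a_{m-i+j}^{q^{i-1}}$ used in the proof of Proposition \ref{prop_12}.
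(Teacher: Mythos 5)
Your argument is correct and is exactly the reasoning the paper leaves implicit: Corollary \ref{cor_3} is stated without proof as an immediate consequence of Proposition \ref{prop_12}, whose index computation $f_\a(v_i,v_j)=a_{m-i+j}^{q^{i-1}}=a_{(j-i)\bmod m}^{q^{i-1}}$ identifies the matrix of $f_\a$ in the Singer basis with $D_{(a_0,\ldots,a_{m-1})}$, and surjectivity follows from the same count of $\cD_m(\Fqm)$. The one refinement worth making is that the rank-preservation you invoke is not literally the property recalled in the introduction (which concerns $\Fq$-bases of $V$), but its extension to the $\Fqm$-basis $v_1,\ldots,v_m$ of $\widehat V$, namely the observation of Remark \ref{rem_3} that extending scalars and changing basis in $\widehat V\times\widehat V$ does not alter the rank of a bilinear form.
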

\begin{remark}
By Proposition \ref{prop_3}, $\Aut_{\Fq}(\Omega)$  is represented  by the group $(\cB_m(\Fqm)\times\cB_m(\Fqm))\rtimes\<t\>\rtimes\Aut(\Fq)$ in the  Singer basis $v_1,\ldots,v_m$. Here, $t$ denote transposition in $M_{m,m}(\Fqm)$ and it  corresponds to the operator $\top$.
\end{remark}

\begin{remark}\label{rem_5}
Note that $(\ref{eq_2})$ coincides with the bilinear form $(6.1)$ in $\cite{ds}$ when $m'=m$.
\end{remark}
%
%


%
\begin{remark}\label{rem_3}
Since a change of basis in $ \widehat V\times  \widehat V$  preserves the rank of  bilinear forms, for any given $f\in\Omega$ we can consider its matrix representation in the Singer basis $v_1,\ldots,v_m$. Therefore, we can assume $f=f_\a$ for some $m$-ple $\a$ over $\Fqm$, so that $\Rad(f_\a)$ is the set of vectors $v'=x'v_1+\ldots+x'^{q^{m-1}}v_m\in V$, $x'\in\Fqm$, such that   $L_{\a}(x')=0$.
\end{remark}

We are now in position to construct non-linear MRD codes as subsets of $\Omega$. 

Let $N$ denote the norm map from $\Fqm$ onto $\Fq$: 
\[
N:x\in \Fqm \rightarrow N(x)=\prod_{j=0}^{m-1}{x^{q^j}}\in \Fq.
\]

For every nonzero element $\alpha\in \Fqm$,  let 
\[
\pi_\alpha=\{(\lambda x,\lambda\alpha x^q,\lambda\alpha^{1+q} x^{q^2},\ldots,\lambda\alpha^{1+q+\ldots+q^{m-2}}x^{q^{m-1}}):\lambda,x \in\Fqm\setminus\{0\}\}.
\]

\begin{remark} \label{rem_2} 
 The matrix of the  Singer cycle $\sigma$ of $V$ in the basis $v_1,\ldots, v_m$ is  $\diag(\mu,\mu^q,\ldots,\mu^{q^{m-1}})$, where $\mu$ is a generator of the multiplicative group of $\Fqm$ $\cite{coop}$. If $S$ is the Singer cyclic group generated by $\sigma$, then the set $\cF_{\pi_a}$ is the $(S\times S)$-orbit of the bilinear form $f_\a$, with $\a=(1,\alpha,\alpha^{1+q},\ldots,\alpha^{1+\ldots+q^{m-2}})$. 
It turns out  that the bilinear forms in $\cF_{\pi_a}$ have constant rank.
\end{remark}

\begin{proposition}\label{prop_13}
 $\pi_\alpha=\pi_\beta$  if and only if $N(\alpha)=N(\beta)$. 
\end{proposition}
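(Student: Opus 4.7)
The plan is to translate the set equality $\pi_\alpha=\pi_\beta$ into a multiplicative condition on $\alpha/\beta$ in $\Fqm\setminus\{0\}$, and then recognise that condition as the norm equality. The bridging fact is elementary: since $\Fqm\setminus\{0\}$ is cyclic of order $q^m-1$, the image of the map $y\mapsto y^{q-1}$ on $\Fqm\setminus\{0\}$ is a subgroup of index $\gcd(q-1,q^m-1)=q-1$, and this image is contained in $\ker N$ because $N(y^{q-1})=N(y)^{q-1}=1$ (as $N(y)\in\Fq\setminus\{0\}$). Since $\ker N$ also has index $q-1$, the two subgroups coincide, so $\alpha/\beta$ is a $(q-1)$-th power in $\Fqm\setminus\{0\}$ if and only if $N(\alpha)=N(\beta)$.

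For the implication $N(\alpha)=N(\beta)\Rightarrow\pi_\alpha=\pi_\beta$, I would fix $\gamma\in\Fqm\setminus\{0\}$ with $\alpha/\beta=\gamma^{q-1}$, and for a generic element of $\pi_\alpha$ indexed by $(\lambda,x)$ set $\lambda'=\lambda/\gamma$ and $x'=\gamma x$. Using the identity $\gamma^{q^i-1}=\gamma^{(q-1)(1+q+\cdots+q^{i-1})}=(\alpha/\beta)^{1+q+\cdots+q^{i-1}}$, a direct substitution rewrites the $i$-th component $\lambda\alpha^{1+q+\cdots+q^{i-1}}x^{q^i}$ as $\lambda'\beta^{1+q+\cdots+q^{i-1}}(x')^{q^i}$, showing the tuple lies in $\pi_\beta$. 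Interchanging the roles of $\alpha$ and $\beta$ then yields the equality.

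For the reverse implication, I would test the inclusion $\pi_\alpha\subseteq\pi_\beta$ against a single distinguished tuple, namely $(1,\alpha,\alpha^{1+q},\ldots,\alpha^{1+q+\cdots+q^{m-2}})$, obtained from $\lambda=x=1$. Its membership in $\pi_\beta$ produces $\lambda',x'\in\Fqm\setminus\{0\}$ satisfying $\lambda'x'=1$ (first coordinate) and $\lambda'\beta(x')^q=\alpha$ (second coordinate). Eliminating $\lambda'$ gives $(x')^{q-1}=\alpha/\beta$; taking norms and using that $N(x')\in\Fq\setminus\{0\}$ satisfies $N(x')^{q-1}=1$, one concludes $N(\alpha)=N(\beta)$.

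No serious obstacle is expected: the computation is elementary and the two equations extracted from the first two coordinates already suffice for the necessary direction. The only conceptual point worth noting is that the remaining coordinates add no further constraint, because the relation $(x')^{q^i-1}=(\alpha/\beta)^{1+q+\cdots+q^{i-1}}$ is forced by $(x')^{q-1}=\alpha/\beta$, so higher coordinates are automatically compatible with the choice of $(\lambda',x')$ supplied by the sufficiency argument.
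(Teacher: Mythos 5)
Your proof is correct and follows essentially the same route as the paper: both directions hinge on the identification of $\ker N$ with the $(q-1)$-th powers (the paper writes $\alpha=\beta c^{q-1}$), followed by the same coordinate computation. The only cosmetic differences are that you verify the inclusion on a generic element via $(\lambda,x)\mapsto(\lambda/\gamma,\gamma x)$ rather than on the single orbit representative, and in the converse you extract $(x')^{q-1}=\alpha/\beta$ from the first two coordinates and take norms, where the paper also invokes the last coordinate to assemble $N(\alpha)=N(\beta)$ directly.
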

\begin{proof}
Let $\alpha,\beta\in\Fqm\setminus\{0\}$ such that  $N(\alpha)=N(\beta)$. By Remark \ref{rem_2} it suffices to show 
that $(1,\alpha,\alpha^{1+q}, \ldots,\alpha^{1+q+\ldots+q^{m-2}})$ is in  $\pi_\beta$. 

Since  $N(\alpha)=N(\beta)$, then  $\alpha=\beta c^{q-1}$ for some $c\in\Fqm\setminus\{0\}$. As \\
$(1+q+\ldots+q^{k})(q-1)=q^{k+1}-1$, we have
\[
\alpha^{1+q+\ldots+q^{k}} = c^{-1}\beta^{1+q+\ldots+q^{k}}  c^{q^{k+1}}.
\]

Conversely, let $\pi_\alpha=\pi_\beta$. Then
\begin{equation}\label{eq_14}
\begin{array}{rcl}
1 & = & \lambda x\\
\alpha  & =  &  \lambda \beta   x^{q}\\
\alpha^{1+\ldots+q^{m-2}} & =  &  \lambda \beta^{1+\ldots+q^{m-2}}   x^{q^{m-1}}
\end{array} 
\end{equation}
for   some $\lambda,x\in\Fqm\setminus\{0\}$.
From the last equation we get 
\[
\alpha^{q+q^2+\ldots+q^{m-1}} =   \lambda^q\beta^{q+q^2+\ldots+q^{m-1}} x.
\]
By taking into account the first and second   equation of (\ref{eq_14}) we get  
\[
N(\alpha)=\lambda^q\lambda N(\beta) xx^q=N(\beta).
\]
\end{proof}
We will write $\pi_a$ instead of $\pi_\alpha$,  if $\alpha$ is an element of $\Fqm\setminus\{0\}$ with $N(\alpha)=a$. 

\begin{lemma}\label{lem_6}
Every $\pi_a$ has size $(q^m-1)^2/(q-1)$.
\end{lemma}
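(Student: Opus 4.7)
The plan is to view $\pi_\alpha$ as the image of the parametrization map
\[
\Phi_\alpha:(\Fqm\setminus\{0\})\times(\Fqm\setminus\{0\})\longrightarrow (\Fqm)^m,\quad (\lambda,x)\longmapsto \bigl(\lambda x,\lambda\alpha x^q,\ldots,\lambda\alpha^{1+q+\cdots+q^{m-2}}x^{q^{m-1}}\bigr),
\]
and to show that the fibres of $\Phi_\alpha$ all have size $q-1$. Since the domain has size $(q^m-1)^2$, this will give $|\pi_\alpha|=(q^m-1)^2/(q-1)$ at once.

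First I would fix a pair $(\lambda,x)$ and ask which pairs $(\lambda',x')$ give the same image. Comparing the first two coordinates yields $\lambda x=\lambda'x'$ and $\lambda\alpha x^q=\lambda'\alpha(x')^q$. Dividing these (legitimate since $\alpha,\lambda,\lambda',x,x'$ are all nonzero) gives $x^{q-1}=(x')^{q-1}$, i.e.\ $x'=cx$ for some $c\in\Fq^*$. Substituting back into $\lambda x=\lambda' x'$ forces $\lambda'=\lambda/c$.

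Next I would verify that every such pair $(\lambda/c,cx)$ with $c\in\Fq^*$ indeed produces the same $m$-tuple as $(\lambda,x)$. For the $k$-th coordinate ($k=0,1,\ldots,m-1$) the new value is
\[
(\lambda/c)\,\alpha^{1+q+\cdots+q^{k-1}}(cx)^{q^k}=\lambda\,\alpha^{1+q+\cdots+q^{k-1}}x^{q^k}\cdot c^{q^k-1},
\]
and since $c\in\Fq$ we have $c^{q^k}=c$, so $c^{q^k-1}=1$ and the coordinate is preserved.

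Hence the fibre over each point of $\pi_\alpha$ is in bijection with $\Fq^*$ via $c\mapsto(\lambda/c,cx)$, giving exactly $q-1$ preimages. There is no real obstacle here; the only thing to keep track of is that the two constraints coming from the first two coordinates are already enough to pin down the stabilising scalar, while the remaining coordinates then follow automatically from the fact that elements of $\Fq$ are fixed by the Frobenius $y\mapsto y^q$. Dividing $(q^m-1)^2$ by $q-1$ yields the claimed cardinality.
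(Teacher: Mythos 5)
Your proof is correct and follows essentially the same route as the paper's: both compute the fibres of the parametrization $(\lambda,x)\mapsto(\lambda x,\lambda\alpha x^q,\ldots)$ by comparing the first two coordinates to get $x^{q-1}=(x')^{q-1}$ and then solving for the scalar, concluding each fibre has size $q-1$. Nothing is missing.
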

\begin{proof}
Let $\alpha\in\Fqm\setminus\{0\}$ with $N(\alpha)=a$.
Clearly, we  have
\[
(\lambda x,\lambda\alpha x^q,\lambda\alpha^{1+q} x^{q^2},\ldots,\lambda\alpha^{1+\ldots+q^{m-2}}x^{q^{m-1}})
=(\rho y,\rho\alpha y^q,\rho\alpha^{1+q} y^{q^2},\ldots,\rho\alpha^{1+\ldots+q^{m-2}}x^{q^{m-1}})
\]
if and only if $\lambda x^{q^i}=\rho y^{q^i}$, for $i=0,\ldots,m-1$. If we compare the equalities  with $i=0$ and $i=1$,  we get $x^{q-1}=y^{q-1}$. For every fixed $x\in\Fqm$ there are exactly $q-1$ elements $y$ in $\Fqm$ such that $y^{q-1}=x^{q-1}$.  

Let   $\lambda$ and $x$ be fixed elements in $\Fqm\setminus\{0\}$. Then,   for each  element $y\in\Fqm$  such that $y^{q-1}=x^{q-1}$ we get the unique element $\rho=\lambda x y^{-1}$ and the result is proved.  
\end{proof}

\begin{lemma}\label{prop_9}
\begin{itemize}
\item[i)] If $\a\in \pi_1$, then  $\rank(f_\a)=1$. 
\item[ii)]If $a,b\in\Fq\setminus\{0,1\}$,  then  $\rank(f_\a-f_{\b})\ge m-1$, for any $\a\in \pi_a$ and  $\b\in\pi_b$, with $\b\neq \a$ if $a=b$. 
\end{itemize}
\end{lemma}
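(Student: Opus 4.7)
By Corollary~\ref{cor_3} the rank of $f_{\a}$ equals that of the linearised polynomial $L_\a$, viewed as an $\Fq$-linear endomorphism of $\Fqm$. Writing $\ell_i = (q^i-1)/(q-1)$, the entries of $\a=(\lambda x,\lambda\alpha x^q,\ldots)\in\pi_\alpha$ are $a_i=\lambda\alpha^{\ell_i}x^{q^i}$, so $L_\a(x')=\lambda\sum_{i=0}^{m-1}\alpha^{\ell_i}(xx')^{q^i}$. For part~(i), use $N(\alpha)=1$ to pick $c\in\Fqm^*$ with $\alpha=c^{q-1}$; then $\alpha^{\ell_i}=c^{q^i-1}$ and the sum telescopes:
\[
L_\a(x') \;=\; \frac{\lambda}{c}\sum_{i=0}^{m-1}(cxx')^{q^i} \;=\; \frac{\lambda}{c}\,\Tr(cxx'),
\]
whose image is the $\Fq$-line through $\lambda/c$, so $\rank(f_\a)=1$.

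For part~(ii), Remark~\ref{rem_2} together with the rank-preservation of $(S\times S)$ lets me normalise $\a=(1,\alpha,\alpha^{1+q},\ldots,\alpha^{\ell_{m-1}})$ with $N(\alpha)=a$, keeping $\b=(\mu y,\mu\beta y^q,\ldots)$ general with $N(\beta)=b$. The key device is the \emph{functional equation}
\[
L_\a(x')-\alpha L_\a(x')^q=(1-a)x',\qquad L_\b(x')-\beta' L_\b(x')^q=d\,x',
\]
with $\beta':=\beta/\mu^{q-1}$ (so $N(\beta')=b$) and $d:=\mu(1-b)y$. Each is derived by applying Frobenius to $L_\a$, shifting the index via $\ell_{i+1}=q\ell_i+1$, and using $\alpha^{\ell_m}=N(\alpha)=a$ together with the wrap-around $x'^{q^m}=x'$ to fold the overflow term back onto $x'$. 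Let $K=\ker(L_\a-L_\b)$. For $x'\in K$ set $w=L_\a(x')=L_\b(x')$; subtracting the two equations yields $(\beta'-\alpha)w^q=((1-a)-d)x'$.

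In the generic subcase $\alpha\ne\beta'$ this determines $w^q=\kappa x'$ with $\kappa=((1-a)-d)/(\beta'-\alpha)$, and plugging back gives $w=ex'$ with $e=(1-a)+\alpha\kappa$. The two relations combine to $e^q x'^{q-1}=\kappa$ on $K\setminus\{0\}$; when $\kappa e\ne 0$, the right-hand side is a fixed element of $\Fqm^*$ and the map $u\mapsto u^{q-1}$ on $\Fqm^*$ is $(q-1)$-to-one, so $|K|\le q$, i.e.\ $\dim_{\Fq}K\le 1$. The corner cases $\kappa=0$ or $e=0$ force $K=\{0\}$ directly. In the remaining subcase $\alpha=\beta'$ one has $a=N(\alpha)=N(\beta')=b$, and subtracting the functional equations shows $L:=L_\a-L_\b$ satisfies $L(x)-\alpha L(x)^q=((1-a)-d)x$. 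A coefficient recursion---again using $x^{q^m}=x$ so that the $\alpha L^q$ term contributes to the coefficient of $x$---shows, for $N(\alpha)=a\ne 1$, that this affine equation has a unique $q$-polynomial solution for each right-hand side, vanishing iff the right-hand side does. Hence $(1-a)\ne d$ gives $L$ invertible so $K=\{0\}$, while $(1-a)=d$ forces $L_\a=L_\b$, i.e.\ $\a=\b$, contradicting the hypothesis.

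The main technical obstacle is the clean derivation of the functional equation---especially handling the wrap-around that produces the $(1-a)x'$ right-hand side---and the uniqueness argument in the degenerate subcase; once these are in place, the rest reduces to elementary manipulation of two affine equations in $w$ and $x'$ plus the $(q-1)$-to-one counting on Frobenius-shifted powers.
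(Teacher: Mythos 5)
Your proof is correct. Part (i) is essentially the paper's argument: the paper observes directly that for $\a\in\pi_1$ the equation $L_\a(z)=0$ defines a hyperplane, which is exactly what your telescoping identity $L_\a(x')=\tfrac{\lambda}{c}\Tr(cxx')$ exhibits. For part (ii), however, you take a genuinely different route. The paper argues by contradiction: it assumes two $\Fq$-independent vectors $z_1,z_2$ in the kernel of $L_{\a-\b}$, writes the resulting $2\times m$ homogeneous linear system, solves for $x_1,x_2$ in terms of $x_3,\dots,x_m$ via $2\times 2$ determinant identities, forces $\lambda=\tfrac{a-1}{b-1}x^{-1}$, and then shows that with this value of $\lambda$ the difference actually has trivial kernel. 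You instead establish the twisted identity $L_\a(x')-\alpha L_\a(x')^q=(1-a)x'$ (which I checked: $\alpha L_\a(x')^q=\sum_{j=1}^{m}\alpha^{\ell_j}x'^{q^j}$ with $\ell_j=q\ell_{j-1}+1$, and the overflow term $\alpha^{\ell_m}x'^{q^m}=ax'$ folds back correctly), together with its analogue for $\b$, subtract them on $K=\ker(L_\a-L_\b)$ to get two affine relations in $w$ and $x'$, and bound $|K|$ by the $(q-1)$-to-one property of $u\mapsto u^{q-1}$; the degenerate case $\alpha=\beta'$ (possible only when $a=b$) is settled by the coefficient recursion $c_j=\alpha c_{j-1}^q$, whose closure condition $(1-a)c_0=c$ is uniquely solvable precisely because $a\neq 1$, forcing either $K=\{0\}$ or $\a=\b$. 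Your approach is shorter, avoids the determinant manipulations entirely, makes transparent exactly where the hypotheses $a,b\neq 0,1$ enter, and obtains the kernel bound directly rather than through the paper's two-stage contradiction; the paper's computation, on the other hand, produces the explicit constraint on $\lambda$, which it reuses implicitly in later lemmas that "argue as in the proof of Lemma 2.8."
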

\begin{proof}
i) Let $\a=(\lambda x,\lambda x^q,\ldots, \lambda x^{q^{m-1}})\in\pi_1$. It suffices to  note that $L_\a(z)=(\lambda x)z+(\lambda x^q)z^q+\ldots (\lambda x^{q^{m-1}})z^{q^{m-1}}=0$ is the equation of a hyperplane in the cyclic model of $V$.

ii) 
 By  Remark \ref{rem_2}, we can assume $\a=(1,\alpha,\ldots,a^{1+\ldots+q^{m-2}})$, with $N(\alpha)=a\neq 1$.
 
Let $\b=(\lambda x,\lambda \beta x^q, \ldots, \lambda \beta^{1+q+\ldots+q^{m-2}}x^{q^{m-1}})$, with $N(\beta)=b\neq 1$.

Suppose there exist $z_1,z_2\in\Fqm$ linearly independent over $\Fq$ such that $L_{\a-\b}(z_i)=0$. Then we get
\begin{equation}\label{eq_28}
\begin{array}{lcl}
(1-\lambda x)z_i+(\alpha-\lambda\beta x^q) z_i^q+\ldots +(\alpha^{1+\ldots+q^{m-2}}-\lambda\beta^{1+\ldots+q^{m-2}}x^{q^{m-1}})z_i^{q^{m-1}} &= & 0
\end{array}
\end{equation}
and
\begin{equation}\label{eq_29}
\begin{array}{rcl}
(\alpha^{q+\ldots+q^{m-1}}-\lambda^q\beta^{q+\ldots+q^{m-1}}x)z_i+(1-\lambda^q x^q)z_i^q+\\
[.05in]
\ldots +(\alpha^{q+\ldots+q^{m-2}}-\lambda^q\beta^{q+\ldots+q^{m-2}}x^{q^{m-1}})z_i^{q^{m-1}} & = & 0,
\end{array}
\end{equation}
for $i=1,2$. 

After subtracting Equation (\ref{eq_28})  side-by-side  from  Equation (\ref{eq_29}) multiplied by $\alpha$, we get

\begin{equation}\label{eq_31}
\begin{array}{rcl}
[a-1+(\lambda -\lambda^q\alpha\beta^{q+\ldots+q^{m-1}})x]z_i
+(\lambda\beta -\lambda^q\alpha )x^qz_i^q+\\[.05in]
\ldots
+(\lambda\beta-\lambda^q\alpha)\beta^{q+\ldots+q^{m-2}} x^{q^{m-1}}z_i^{q^{m-1}} & = & 0,
\end{array}
\end{equation}
for $i=1,2$. Then, the $m-$ple 
 \begin{equation}\label{eq_32}
 (a-1+(\lambda -\lambda^q\alpha\beta^{q+\ldots+q^{m-1}})x,
(\lambda\beta -\lambda^q\alpha )x^q,\ldots,
(\lambda\beta-\lambda^q\alpha)\beta^{q+\ldots+q^{m-2}}x^{q^{m-1}})
 \end{equation}
 is a solution of the linear system
\begin{equation}\label{eq_34}
\left\{
\begin{array}{lcl}
z_1X_1+z_1^q X_2+\ldots+z_1^{q^{m-1}} X_m & = & 0\\
z_2X_1+z_2^q X_2+\ldots+z_2^{q^{m-1}} X_m & = & 0
\end{array}
\right.
\end{equation}
with $\Delta=\begin{vmatrix} z_1 & z_1^q\\z_2 & z_2^q\end{vmatrix}\neq 0$. 

The generic solution  $(x_1,x_2,\ldots,x_m)$ of (\ref{eq_34}) has 
\comment{
Let us write the linear system (\ref{eq_34}) as 
\begin{equation*}
\left\{
\begin{array}{lcl}
z_1X_1+z_1^q X_2 & = & -z_1^{q^2}X_3-\ldots-z_1^{q^{m-1}} X_m \\
z_2X_1+z_2^q X_2& = & -z_2^{q^2}X_3-\ldots-z_2^{q^{m-1}} X_m 
\end{array}
\right.
\end{equation*}
so that we get
}
\begin{equation}\label{eq_63}
x_1  =  -\Delta^{-1}\left(\begin{vmatrix} z_1^{q^2} & z_1^q\\z_2^{q^2} & z_2^q\end{vmatrix}x_3+
         \begin{vmatrix} z_1^{q^3} & z_1^q\\z_2^{q^3} & z_2^q\end{vmatrix}x_4+\ldots
          +\begin{vmatrix} z_1^{q^{m-1}} & z_1^q\\z_2^{q^{m-1}} & z_2^q\end{vmatrix}x_m\right)
\end{equation}
and 
\begin{equation}\label{eq_62}
x_2  =  -\Delta^{-1}\left(\begin{vmatrix} z_1 & z_1^{q^2}\\z_2 & z_2^{q^2}\end{vmatrix}x_3+
         \begin{vmatrix} z_1 & z_1^{q^3}\\z_2 & z_2^{q^3}\end{vmatrix}x_4+\ldots
          +\begin{vmatrix} z_1 & z_1^{q^{m-1}}\\z_2 & z_2^{q^{m-1}}\end{vmatrix}x_m\right).
\end{equation}

In the expression (\ref{eq_62})  set 
\[
c_i=\begin{vmatrix} z_1 & z_1^{q^{i-1}}\\z_2 & z_2^{q^{i-1}}\end{vmatrix}, \ \ \ \ i=3,\ldots,m;
\] 
in particular $c_m=\begin{vmatrix} z_1 & z_1^{q^{m-1}}\\z_2 & z_2^{q^{m-1}}\end{vmatrix}$ giving $c_m^q=-\Delta$. Similarly, in the expression (\ref{eq_63}) set 
\[
d_i=
       -\begin{vmatrix}   z_1 & z_1^{q^{i-2}}\\  z_2 & z_2^{q^{i-2}} \end{vmatrix}^q, \ \ \ \ i=3,\ldots,m.
\] 
We have, 
\begin{equation*}
d_i=-c_{i-1}^q,\ \ \ \ \mathrm{for \ } i=4,\ldots,m 
\end{equation*}
and $d_3=-\Delta^q$. We then write
\[
\begin{array}{lcl}
x_1 & = & -\Delta^{-1}(-\Delta^q x_3-c_3^qx_4-\ldots-c_{m-1}^qx_m)\\[.01in]
x_2 & = & -\Delta^{-1}(c_3 x_3+c_4x_4+\ldots+c_{m}x_m)\\
\end{array}
\]

By plugging  (\ref{eq_32}) in the right-hands of the above equalities  
\comment{
of the linear system (\ref{eq_34}), we have 
\[
\begin{array}{lcl}
x_1 & = & N(\alpha)-1+(\lambda-\lambda^q\alpha\beta^{q+\ldots+q^{m-1}}) x\\[.01in]
x_2 & = & (\lambda\beta -\lambda^q\alpha )x^q\\[.01in]
x_3 & = & (\lambda\beta -\lambda^q\alpha)\beta^{q}x^{q^2}\\[.01in] 
& \vdots & \\[.01in]
x_m & = & (\lambda\beta-\lambda^q\alpha)\beta^{q+\ldots+q^{m-2}}x^{q^{m-1}}.
\end{array}
\]

By plugging the expression of $x_3,\ldots, x_m$ in equations (\ref{eq_38}),
} we get
\begin{equation*}
-\Delta^q x_3-c_3^qx_4-\ldots-c_{m-1}^qx_m=(\lambda\beta -\lambda^q\alpha)(-\Delta^q\beta^{q}x^{q^2}-c_3^q\beta^{q+q^2}x^{q^3}-\ldots-c_{m-1}^q\beta^{q+\ldots+q^{m-2}}x^{q^{m-1}})
\end{equation*}
and
\begin{equation*}
c_3 x_3+c_4x_4+\ldots+c_{m}x_m=(\lambda\beta -\lambda^q\alpha)(c_3\beta^{q}x^{q^2}+c_4\beta^{q+q^2}x^{q^3}+\ldots+c_{m}\beta^{q+\ldots+q^{m-2}}x^{q^{m-1}}).
\end{equation*}
\comment{
Thus, we have
\begin{equation*}
\begin{array}{lcl}
\dfrac{-\Delta x_1}{\lambda\beta -\lambda^q\alpha} & = & -\Delta^q \beta^{q}x^{q^2}- c_3^q\beta^{q+q^2}x^{q^3} -\ldots-c_{m-1}^q\beta^{q^2+\ldots+q^{m-2}}x^{q^{m-1}}
\end{array}
\end{equation*}
and 
\begin{equation*}
\begin{array}{lcl}
\left(\dfrac{-\Delta x_2}{\lambda\beta -\lambda^q\alpha}\right)^q & = & c_3^q\beta^{q^2}x^{q^3} +c_4^q \beta^{q^3+q^2}x^{q^4} +\ldots+c_{m}^q\beta^{q^2+\ldots+q^{m-1}}x^{q^{m}}\\ [0.1in]
  & = & -\Delta \beta^{q^2+\ldots+q^{m-1}}x+ c_3^q\beta^{q^2}x^{q^3} +\ldots+c_{m-1}^q\beta^{q^2+\ldots+q^{m-2}}x^{q^{m-1}}
\end{array}
\end{equation*}
giving 
\begin{equation*}
\begin{array}{lcl}
\beta^{q}\left(\dfrac{-\Delta x_2}{\lambda\beta -\lambda^q\alpha}\right)^q & =  & -\Delta \beta^{q+q^2+\ldots+q^{m-1}}x+ c_3^q\beta^{q+q^2}x^{q^3} +\ldots+c_{m-1}^q\beta^{q+q^2+\ldots+q^{m-2}}x^{q^{m-1}}.
\end{array}
\end{equation*}
}
Therefore
\begin{equation}\label{eq_42}
\begin{array}{lcl}
\beta^{q}\left(\dfrac{-\Delta x_2}{\lambda\beta -\lambda^q\alpha}\right)^q +\dfrac{-\Delta x_1}{\lambda\beta -\lambda^q\alpha}& =  & -\Delta \beta^{q+q^2+\ldots+q^{m-1}}x -\Delta^q x^{q^2}\beta^q.
\end{array}
\end{equation}

From (\ref{eq_32}), we have $x_2=(\lambda\beta -\lambda^q\alpha)x^q$ giving 
\[
\beta^{q}\left(\dfrac{-\Delta x_2}{\lambda\beta -\lambda^q\alpha}\right)^q=(-1)^q\Delta^qx^{q^2}\beta^q.
\]

From (\ref{eq_42})
it turns out that the value of $x_1$ must satisfy
\[
-\dfrac{\Delta x_1}{\lambda\beta -\lambda^q\alpha} =  -\Delta \beta^{q+q^2+\ldots+q^{m-1}}x
\]
giving
\begin{equation*}
x_1=(\lambda\beta -\lambda^q\alpha) \beta^{q+q^2+\ldots+q^{m-1}}x=(\lambda b-\lambda^q\alpha \beta^{q+q^2+\ldots+q^{m-1}})x
\end{equation*}
since $\Delta\neq 0$.

From (\ref{eq_32}), we have $x_1=(a-1)+(\lambda-\lambda^q\alpha\beta^{q+\ldots+q^{m-1}}) x$. Therefore,  we get 
\[
(b-1)\lambda x = a-1
\]
i.e.,
\begin{equation}\label{eq_43}
\lambda =\dfrac{a-1}{b-1}x^{-1}.
\end{equation}

By plugging this value in  $\b$, we get
\[
\b=\dfrac{a-1}{b-1}\left(1,\beta x^{q-1},\beta^{1+q} x^{q^2-1},\ldots,\beta^{1+q+\ldots+q^{m-2}} x^{q^{m-1}-1}\right)
\]

Note that if $b=a$, we can assume $\beta=\alpha$ giving $x\not\in\Fq$ as $\b\neq \a$.

We claim that the bilinear form $(f_\a-f_\b)$ has maximum rank $m$. Indeed, suppose there exists a nonzero $z\in\Fqm
$ such that $L_{\a-\b}(z)=0$.  By plugging (\ref{eq_43}) in Equation (\ref{eq_31}) we get 
\[
\begin{array}{rcl}
\dfrac{a-1}{b-1}\left[(\beta-\alpha(x^{-1})^{q-1})\beta^{q+\ldots+q^{m-1}}z+(\beta x^{q-1}-\alpha)z^q+\right. &&  \\[.05in]
 \ldots+ \left.(\beta x^{q^{m-1}-1}-\alpha x^{q^{m-1}-q})\beta^{q+\ldots+q^{m-2}}z^{q^{m-1}}\right] & = &0
\end{array}
\]
or, equivalently,
\[
\begin{array}{lcl}


\left(\dfrac{\beta}{x}-\dfrac{\alpha}{x^q}\right)(\beta^{q+\ldots+q^{m-1}}xz
+(xz)^q+\beta^q(xz)^{q^2}+\ldots+\beta^{q+\ldots+q^{m-2}}(xz)^{q^{m-1}}) & =&0,

\end{array}
\]
where $\dfrac{\beta}{x}-\dfrac{\alpha}{x^q} \neq0$ since either $b\neq a$ or $x^q\neq x$ if $b=a$. Therefore,  the  following equation holds:
\begin{equation}\label{eq_45}
\begin{array}{lcl}
\beta^{q+\ldots+q^{m-1}}y
+y^q+\beta^qy^{q^2}+\beta^{q+q^2}y^{q^3}+\ldots+\beta^{q+\ldots+q^{m-2}}y^{q^{m-1}} & = &0
\end{array}
\end{equation}
given
\begin{equation}\label{eq_46}
\begin{array}{lcl}
\beta^{q^2+\ldots+q^{m-1}}y+\beta^{1+q^2+\ldots+q^{m-1}}y^q
+y^{q^2}+\beta^{q^2}y^{q^3}+\ldots+\beta^{q^2+\ldots+q^{m-2}}y^{q^{m-1}} & = &0.
\end{array}
\end{equation}

By subtracting Equation (\ref{eq_45}) from (\ref{eq_46}) multiplied by $\beta^q$  we get $b=1$, a contradiction. 
\end{proof}

\comment{

\begin{lemma}\label{lem_9}
Let $a,b\in\Fq\setminus\{0,1\}$. Then $\rank (f_\a-f_{\b})\ge m-1$ for any $\a\in\pi_a$ and $\b\in\pi_b$.
\end{lemma}
\begin{proof}
By  Remark \ref{rem_2}, we can assume $\a=(1,\alpha,\ldots,a^{1+\ldots+q^{m-2}})\in\pi_a$.
Suppose there exist $z_1,z_2\in\Fqm$ linearly independent over $\Fq$ such that $L_{\a-\b}(z_i)=0$. Then we get
\begin{equation}\label{eq_28}
\begin{array}{lcl}
(1-\lambda x)z_i+(\alpha-\lambda\beta x^q) z_i^q+(\alpha^{1+q}-\lambda\beta^{1+q} x^{q^2}) z_i^{q^2}+\ldots \\
+(\alpha^{1+\ldots+q^{m-2}}-\lambda\beta^{1+\ldots+q^{m-2}}x^{q^{m-1}})z_i^{q^{m-1}} &= & 0
\end{array}
\end{equation}
and
\begin{equation}\label{eq_29}
\begin{array}{lcl}
(\alpha^{q+\ldots+q^{m-1}}-\lambda^q\beta^{q+\ldots+q^{m-1}}x)z_i+(1-\lambda^q x^q)z_i^q+(\alpha^q-\lambda^q\beta^q x^{q^2}) z_i^{q^2}+\ldots\\
+(\alpha^{q+\ldots+q^{m-2}}-\lambda^q\beta^{q+\ldots+q^{m-2}}x^{q^{m-1}})z_i^{q^{m-1}} & = & 0,
\end{array}
\end{equation}
for $i=1,2$. 

After subtracting Equation (\ref{eq_28})  side-by-side  from  Equation (\ref{eq_29}) multiplied by $\alpha$, one gets
\begin{equation}\label{eq_30}
\begin{array}{lcl}
(N(\alpha)-\lambda^q\alpha\beta^{q+\ldots+q^{m-1}}x+\lambda x-1)z_i\\
+(\lambda\beta x^q-\lambda^q\alpha x^q)z_i^q\\
+(\lambda\beta^{1+q} x^{q^2}-\lambda^q\alpha\beta^{q} x^{q^2}) z_i^{q^2}\\\ldots\\
+(\lambda\beta^{1+q+\ldots+q^{m-2}} x^{q^{m-1}}-\lambda^q\alpha\beta^{q+\ldots+q^{m-2}} x^{q^{m-1}})z_i^{q^{m-1}} & = & 0,
\end{array}
\end{equation}
for $i=1,2$, giving

\begin{equation}\label{eq_31}
\begin{array}{lcl}
((N(\alpha)-1)z_i+(\lambda -\lambda^q\alpha\beta^{q+\ldots+q^{m-1}})xz_i\\
+(\lambda\beta -\lambda^q\alpha )x^qz_i^q\\
+(\lambda\beta -\lambda^q\alpha)\beta^{q}x^{q^2} z_i^{q^2}\\\ldots\\
+(\lambda\beta-\lambda^q\alpha)\beta^{q+\ldots+q^{m-2}} x^{q^{m-1}}z_i^{q^{m-1}} & = & 0,
\end{array}
\end{equation}
for $i=1,2$. Then, the $m-$ple 
 \begin{equation}\label{eq_32}
 (N(\alpha)-1+(\lambda -\lambda^q\alpha\beta^{q+\ldots+q^{m-1}})x,
(\lambda\beta -\lambda^q\alpha )x^q,\ldots,
(\lambda\beta-\lambda^q\alpha)\beta^{q+\ldots+q^{m-2}}x^{q^{m-1}})
 \end{equation}
 is a solution of the linear system
\begin{equation}\label{eq_34}
\left\{
\begin{array}{lcl}
z_1X_1+z_1^q X_2+\ldots+z_1^{q^{m-1}} X_m & = & 0\\
z_2X_1+z_2^q X_2+\ldots+z_2^{q^{m-1}} X_m & = & 0
\end{array}
\right.
\end{equation}
with $\Delta=\begin{vmatrix} z_1 & z_1^q\\z_2 & z_2^q\end{vmatrix}\neq 0$.  (We note that $\lambda\beta-\lambda^q\alpha\neq$ since $\lambda,\alpha,\beta\neq 0$ and $N(\alpha)\neq N(\beta)$.)

We now calculate the form of the generic solution  $(x_1,x_2,\ldots,x_m)$ of (\ref{eq_34}).

Let us write the linear system (\ref{eq_34}) as 
\begin{equation*}
\left\{
\begin{array}{lcl}
z_1X_1+z_1^q X_2 & = & -z_1^{q^2}X_3-\ldots-z_1^{q^{m-1}} X_m \\
z_2X_1+z_2^q X_2& = & -z_2^{q^2}X_3-\ldots-z_2^{q^{m-1}} X_m 
\end{array}
\right.
\end{equation*}
so that we get
\begin{equation}\label{35}
x_1  =  -\Delta^{-1}\left(\begin{vmatrix} z_1^{q^2} & z_1^q\\z_2^{q^2} & z_2^q\end{vmatrix}x_3+
         \begin{vmatrix} z_1^{q^3} & z_1^q\\z_2^{q^3} & z_2^q\end{vmatrix}x_4+\ldots
          +\begin{vmatrix} z_1^{q^{m-1}} & z_1^q\\z_2^{q^{m-1}} & z_2^q\end{vmatrix}x_m\right)
\end{equation}
and 
\begin{equation}\label{36}
x_2  =  -\Delta^{-1}\left(\begin{vmatrix} z_1 & z_1^{q^2}\\z_2 & z_2^{q^2}\end{vmatrix}x_3+
         \begin{vmatrix} z_1 & z_1^{q^3}\\z_2 & z_2^{q^3}\end{vmatrix}x_4+\ldots
          +\begin{vmatrix} z_1 & z_1^{q^{m-1}}\\z_2 & z_2^{q^{m-1}}\end{vmatrix}x_m\right).
\end{equation}

In the expression of $x_2$,  the coefficient of $x_i$, $i=3,\ldots,m$, is
\[
c_i=\begin{vmatrix} z_1 & z_1^{q^{i-1}}\\z_2 & z_2^{q^{i-1}}\end{vmatrix};
\] 
in particular $c_m=\begin{vmatrix} z_1 & z_1^{q^{m-1}}\\z_2 & z_2^{q^{m-1}}\end{vmatrix}$ giving $c_m^q=-\Delta$. Similarly, in the expression of $x_1$,  the coefficient of $x_i$, $i=3,\ldots,m$, is
\[
d_i=\begin{vmatrix}  z_1^{q^{i-1}}& z_1^q \\ z_2^{q^{i-1}} & z_2^q\end{vmatrix}=
       \begin{vmatrix}  z_1^{q^{i-2}}& z_1 \\ z_2^{q^{i-2}} & z_2\end{vmatrix}^q=
       -\begin{vmatrix}   z_1 & z_1^{q^{i-2}}\\  z_2 & z_2^{q^{i-2}} \end{vmatrix}^q.
\] 
Thus, 
\begin{equation}\label{eq_37}
d_i=-c_{i-1}^q,\ \ \ \ \mathrm{for \ } i=4,\ldots,m 
\end{equation}
and $d_3=-\Delta^q$. We then write
\begin{equation}\label{eq_38}
\begin{array}{lcl}
x_1 & = & -\Delta^{-1}(-\Delta^q x_3-c_3^qx_4-\ldots-c_{m-1}^qx_m)\\[.01in]
x_2 & = & -\Delta^{-1}(c_3 x_3+c_4x_4+\ldots+c_{m}x_m)\\
\end{array}
\end{equation}

For the  particular solution (\ref{eq_32}) of the linear system (\ref{eq_34}), we have 
\[
\begin{array}{lcl}
x_1 & = & N(\alpha)-1+(\lambda-\lambda^q\alpha\beta^{q+\ldots+q^{m-1}}) x\\[.01in]
x_2 & = & (\lambda\beta -\lambda^q\alpha )x^q\\[.01in]
x_3 & = & (\lambda\beta -\lambda^q\alpha)\beta^{q}x^{q^2}\\[.01in] 
& \vdots & \\[.01in]
x_m & = & (\lambda\beta-\lambda^q\alpha)\beta^{q+\ldots+q^{m-2}}x^{q^{m-1}}.
\end{array}
\]

By plugging the expression of $x_3,\ldots, x_m$ in equations (\ref{eq_38}), we get
\begin{equation}\label{eq_44}
-\Delta^q x_3-c_3^qx_4-\ldots-c_{m-1}^qx_m=(\lambda\beta -\lambda^q\alpha)(-\Delta^q\beta^{q}x^{q^2}-c_3^q\beta^{q+q^2}x^{q^3}-\ldots-c_{m-1}^q\beta^{q+\ldots+q^{m-2}}x^{q^{m-1}})
\end{equation}
and
\begin{equation}\label{eq_39}
c_3 x_3+c_4x_4+\ldots+c_{m}x_m=(\lambda\beta -\lambda^q\alpha)(c_3\beta^{q}x^{q^2}+c_4\beta^{q+q^2}x^{q^3}+\ldots+c_{m}\beta^{q+\ldots+q^{m-2}}x^{q^{m-1}}).
\end{equation}

Thus, we have
\begin{equation}
\begin{array}{lcl}
\dfrac{-\Delta x_1}{\lambda\beta -\lambda^q\alpha} & = & -\Delta^q \beta^{q}x^{q^2}- c_3^q\beta^{q+q^2}x^{q^3} -\ldots-c_{m-1}^q\beta^{q^2+\ldots+q^{m-2}}x^{q^{m-1}}
\end{array}
\end{equation}
and 
\begin{equation}\label{eq_40}
\begin{array}{lcl}
\left(\dfrac{-\Delta x_2}{\lambda\beta -\lambda^q\alpha}\right)^q & = & c_3^q\beta^{q^2}x^{q^3} +c_4^q \beta^{q^3+q^2}x^{q^4} +\ldots+c_{m}^q\beta^{q^2+\ldots+q^{m-1}}x^{q^{m}}\\ [0.1in]
  & = & -\Delta \beta^{q^2+\ldots+q^{m-1}}x+ c_3^q\beta^{q^2}x^{q^3} +\ldots+c_{m-1}^q\beta^{q^2+\ldots+q^{m-2}}x^{q^{m-1}}
\end{array}
\end{equation}
giving 
\begin{equation}\label{eq_41}
\begin{array}{lcl}
\beta^{q}\left(\dfrac{-\Delta x_2}{\lambda\beta -\lambda^q\alpha}\right)^q & =  & -\Delta \beta^{q+q^2+\ldots+q^{m-1}}x+ c_3^q\beta^{q+q^2}x^{q^3} +\ldots+c_{m-1}^q\beta^{q+q^2+\ldots+q^{m-2}}x^{q^{m-1}}.
\end{array}
\end{equation}
Therefore
\begin{equation}\label{eq_42}
\begin{array}{lcl}
\beta^{q}\left(\dfrac{-\Delta x_2}{\lambda\beta -\lambda^q\alpha}\right)^q +\dfrac{-\Delta x_1}{\lambda\beta -\lambda^q\alpha}& =  & -\Delta \beta^{q+q^2+\ldots+q^{m-1}}x -\Delta^q x^{q^2}\beta^q.
\end{array}
\end{equation}
>From (\ref{eq_32}), we have $x_2=(\lambda\beta -\lambda^q\alpha)x^q$ giving 
\[
\beta^{q}\left(\dfrac{-\Delta x_2}{\lambda\beta -\lambda^q\alpha}\right)^q=(-1)^q\Delta^qx^{q^2}\beta^q.
\]

>From \ref{eq_42}
it turns out that the value of $x_1$ must satisfy
\[
-\dfrac{\Delta x_1}{\lambda\beta -\lambda^q\alpha} =  -\Delta \beta^{q+q^2+\ldots+q^{m-1}}x
\]
giving
\[
x_1=(\lambda\beta -\lambda^q\alpha) \beta^{q+q^2+\ldots+q^{m-1}}x=\lambda N(\beta)x-\lambda^q\alpha \beta^{q+q^2+\ldots+q^{m-1}}x
\]
since $\Delta\neq 0$. From (\ref{eq_32}), we have $x_1=(N(\alpha)-1)+(\lambda-\lambda^q\alpha\beta^{q+\ldots+q^{m-1}}) x$ and we get 
\[
( N(\beta)-1)\lambda x = N(\alpha)-1
\]
i.e.,
\begin{equation}\label{eq_43}
\lambda =\dfrac{ N(\alpha)-1}{N(\beta)-1}x^{-1}.
\end{equation}

By plugging this value in the expression of the $m-$ple $\a'=(\lambda x, \lambda\beta x^q,\ldots,\lambda \beta^{1+q+\ldots+q^{m-2}}x^{q^{m-1}})$ we get
\[
\a'=\dfrac{ N(\alpha)-1}{N(\beta)-1}\left(1,\beta x^{q-1},\beta^{1+q} x^{q^2-1},\ldots,\beta^{1+q+\ldots+q^{m-2}} x^{q^{m-1}-1}\right)
\]
Note that if $N(\beta)=N(\alpha)$, we can assume $\beta=\alpha$ giving $x\not\in\Fq$ as $\a'\neq \a$.

We claim that the bilinear form $(f_\a-f_\a')$ has maximum rank $m$. Indeed, suppose there exists a nonzero $z\in\Fqm
$ such that $L_{\a-\a'}(z)=0$.  By plugging (\ref{eq_43}) in Equation (\ref{eq_31}) we get 
\[
\begin{array}{lcl}
\dfrac{ N(\alpha)-1}{N(\beta)-1}\left[(\beta-\alpha(x^{-1})^{q-1})\beta^{q+\ldots+q^{m-1}}z\right. &&  \\[0.1in]
+(\beta x^{q-1}-\alpha)z^q+(\beta x^{q^2-1}-\alpha x^{q^2-q})\beta^qz^{q^2}+(\beta x^{q^3-1}-\alpha x^{q^2-q})\beta^{q+q^2}z^{q^3}&&  \\[0.1in]
 \ldots+ \left.(\beta x^{q^{m-1}-1}-\alpha x^{q^{m-1}-q})\beta^{q+\ldots+q^{m-2}}z^{q^{m-1}}\right] & = &0
\end{array}
\]
or, equivalently,
\[
\begin{array}{lcl}


\left(\dfrac{\beta}{x}-\dfrac{\alpha}{x^q}\right)(\beta^{q+\ldots+q^{m-1}}xz
+(xz)^q+\beta^q(xz)^{q^2}+\ldots+\beta^{q+\ldots+q^{m-2}}(xz)^{q^{m-1}}) & =&0

\end{array}
\]
with $\dfrac{\alpha}{x^q}-\dfrac{\beta}{x}\neq0$ since either $N(\alpha)\neq N(\beta)$ or $x^q\neq x$ if $N(\beta)=N(\alpha)$. Finally,  the  following equations hold:
\begin{equation}\label{eq_45}
\begin{array}{lcl}
\beta^{q+\ldots+q^{m-1}}y
+y^q+\beta^qy^{q^2}+\beta^{q+q^2}y^{q^3}+\ldots+\beta^{q+\ldots+q^{m-2}}y^{q^{m-1}} & = &0
\end{array}
\end{equation}
and 
\begin{equation}\label{eq_46}
\begin{array}{lcl}
\beta^{q^2+\ldots+q^{m-1}}y+\beta^{1+q^2+\ldots+q^{m-1}}y^q
+y^{q^2}+\beta^{q^2}y^{q^3}+\ldots+\beta^{q^2+\ldots+q^{m-2}}y^{q^{m-1}} & = &0.
\end{array}
\end{equation}

By subtracting equation (\ref{eq_45}) from (\ref{eq_46}) multiplied by $\beta^q$  we get $N(\beta)=1$, a contradiction. 
\end{proof}
}
For every nonzero element $\alpha\in\Fqm$, let  
\[
J_\alpha=\{(\lambda x,0,\ldots,0,-\lambda\alpha x^{q^{m-1}}):\lambda,x\in\Fqm\setminus\{0\}\}.
\]

\begin{remark} \label{rem_4} 
 Note that the set $\cF_{J_\alpha}$ is the $(S\times S)$-orbit of the bilinear form $f_\a$, with $\a=(1,0,\ldots,0,-\alpha)$. 
It turns out  that the bilinear forms in $\cF_{J_\alpha}$ have constant rank.
\end{remark}

By arguing similarly to  the proof of Proposition \ref{prop_13} and Lemma \ref{lem_6}, we get the following result.

\begin{lemma}\label{prop9a}

Each set $J_\alpha$ has size $(q^m-1)^2/(q-1)$ and 
$J_\alpha=J_\beta$ if and only if $N(\alpha) = N(\beta)$. 
\end{lemma}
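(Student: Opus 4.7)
\textbf{Proof plan for Lemma \ref{prop9a}.} The set $J_\alpha$ is parameterized by pairs $(\lambda,x)\in(\Fqm\setminus\{0\})^2$, and only the first and last coordinates of the resulting $m$-ple carry information. My plan is to first count the fibers of the parametrization (as in the proof of Lemma \ref{lem_6}), and then to characterize the equality $J_\alpha=J_\beta$ via a single ``normalized'' element $(1,0,\ldots,0,-\alpha)$ (as in the proof of Proposition \ref{prop_13}).

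For the cardinality, suppose $(\lambda x,0,\ldots,0,-\lambda\alpha x^{q^{m-1}})=(\rho y,0,\ldots,0,-\rho\alpha y^{q^{m-1}})$ for $\lambda,\rho,x,y\in\Fqm\setminus\{0\}$. Comparing the first coordinates gives $\rho=\lambda x/y$, and substituting into the equality of the last coordinates reduces to
\[
x^{q^{m-1}-1}=y^{q^{m-1}-1}.
\]
Since $\gcd(q^{m-1}-1,q^m-1)=q^{\gcd(m-1,m)}-1=q-1$, for each fixed $x$ there are exactly $q-1$ choices of $y$ satisfying this equation, and $\rho$ is then determined. Hence each element of $J_\alpha$ has exactly $q-1$ preimages under $(\lambda,x)\mapsto(\lambda x,0,\ldots,0,-\lambda\alpha x^{q^{m-1}})$, giving $|J_\alpha|=(q^m-1)^2/(q-1)$.

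For the second assertion, it suffices to check when $(1,0,\ldots,0,-\alpha)\in J_\beta$, since the $(S\times S)$-orbit structure of Remark \ref{rem_4} implies that $J_\alpha=J_\beta$ as soon as a single element of $J_\alpha$ lies in $J_\beta$. The condition $(1,0,\ldots,0,-\alpha)=(\rho y,0,\ldots,0,-\rho\beta y^{q^{m-1}})$ forces $\rho=1/y$ and hence
\[
\alpha/\beta=y^{q^{m-1}-1}
\]
for some $y\in\Fqm\setminus\{0\}$. The key algebraic fact is that the image of the map $t\mapsto t^{q^{m-1}-1}$ on $\Fqm\setminus\{0\}$ coincides with $\ker N$: both are the unique subgroup of order $(q^m-1)/(q-1)$ of the cyclic group $\Fqm\setminus\{0\}$, and containment of the image in $\ker N$ follows at once from $N(t^{q^{m-1}-1})=N(t)^{q^{m-1}-1}=1$, using $(q-1)\mid(q^{m-1}-1)$ and $N(t)\in\Fq\setminus\{0\}$. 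Therefore $\alpha/\beta$ is a $(q^{m-1}-1)$-th power in $\Fqm\setminus\{0\}$ if and only if $N(\alpha)=N(\beta)$.

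I do not expect any real obstacle: the argument is essentially identical in spirit to the proofs of Lemma \ref{lem_6} and Proposition \ref{prop_13}, just with the exponent $q-1$ there replaced by $q^{m-1}-1$ here. The only place where one must be slightly careful is the identification of the image of $t\mapsto t^{q^{m-1}-1}$ with $\ker N$, which I would state explicitly since the divisibility $(q-1)\mid(q^{m-1}-1)$ is the reason the two subgroups of $\Fqm\setminus\{0\}$ with the same order $(q^m-1)/(q-1)$ actually coincide.
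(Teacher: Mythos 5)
Your proof is correct and follows essentially the same route the paper intends (it explicitly defers to the arguments of Lemma \ref{lem_6} and Proposition \ref{prop_13}): count fibers of the parametrization $(\lambda,x)\mapsto(\lambda x,0,\ldots,0,-\lambda\alpha x^{q^{m-1}})$ via $\gcd(q^{m-1}-1,q^m-1)=q-1$, and reduce $J_\alpha=J_\beta$ to deciding when $\alpha/\beta$ is a $(q^{m-1}-1)$-th power. Your identification of the image of $t\mapsto t^{q^{m-1}-1}$ with $\ker N$ as the unique subgroup of order $(q^m-1)/(q-1)$ of the cyclic group $\Fqm\setminus\{0\}$ is a clean way to get both implications at once and is worth stating explicitly, as you note.
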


\comment{
\begin{proof}
Just mimic  the proof of Proposition \ref{}.

Let $\alpha,\beta\in\Fqm\setminus\{0\}$ such that  $N(\alpha)=N(\beta)$. It suffices to show that $(1,0,0, \ldots,0,-\alpha)$ is in  $J_\beta$. Since  $N(\alpha)=N(\beta)$ then  $\alpha=\beta c^{q-1}$ for some $c\in\Fqm\setminus\{0\}$.  It follows that $\alpha = c^{-1}\beta c^q=c^{-1}\beta d^{q^{m-1}}$ for some nonzero $d\in \Fq^m$.

Conversely, let $J_\alpha=J_\beta$. Then
\begin{equation}\label{eq_14a}
\begin{array}{rcl}
1 & = & \lambda x\\
\alpha  & =  &  \lambda \beta   x^{q^{m-1}}\\
\end{array} 
\end{equation}
for   some $\lambda,x\in\Fqm\setminus\{0\}$.

It follows that $\alpha = \beta   x^{q^{m-1}-1}$ hence $N(\alpha)=N(\beta).$

\end{proof}
}
We will write $J_a$ instead of $J_\alpha$,  if $\alpha$ is an element of $\Fqm$ with $N(\alpha)=a$.

\begin{lemma}\label{prop_11}
For any $\a=(x,0,\ldots,0,y)$ with $x,y\in\Fqm$ not both zero,  $\rank(f_\a)\ge m-1$. 
\end{lemma}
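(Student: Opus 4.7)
\textbf{Proof plan for Lemma \ref{prop_11}.}
By Corollary \ref{cor_3} together with Remark \ref{rem_3}, the rank of $f_\a$ equals $m-\dim_{\Fq}\Ker(L_\a)$, where $L_\a$ is the linearized polynomial associated to the $m$-tuple $\a=(x,0,\ldots,0,y)$, namely
\[
L_\a(z)=xz+yz^{q^{m-1}}.
\]
So the statement reduces to showing that the $\Fq$-kernel of $L_\a$ has dimension at most $1$, equivalently, that $L_\a$ has at most $q$ roots in $\Fqm$.

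The two degenerate cases are immediate: if $y=0$ (so $x\neq 0$) then $L_\a(z)=xz$ has only the trivial root, and if $x=0$ (so $y\neq 0$) then $L_\a(z)=yz^{q^{m-1}}$ again has only the trivial root, giving $\rank(f_\a)=m$ in both cases. The substantive case is $x,y\neq 0$. Here any nonzero root of $L_\a$ satisfies
\[
z^{q^{m-1}-1}=-\frac{x}{y}.
\]
Consider the endomorphism $\phi: z\mapsto z^{q^{m-1}-1}$ of the cyclic group $\Fqm^*$. Its fibres all have size $\gcd(q^{m-1}-1,\,q^m-1)$, which I would compute using the identity $(q^m-1)/(q-1)=q\cdot(q^{m-1}-1)/(q-1)+1$ (so these two cofactors are coprime), yielding $\gcd(q^{m-1}-1,q^m-1)=q-1$. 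Hence the equation above has either $0$ or $q-1$ solutions in $\Fqm^*$, and in either case $|\Ker(L_\a)|\le q$. Since $\Ker(L_\a)$ is an $\Fq$-subspace of $\Fqm$, its $\Fq$-dimension is at most $1$, so $\rank(f_\a)\ge m-1$.

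There is really no serious obstacle: the only nontrivial ingredient is recognising that the $\Fq$-dimension of the kernel is controlled not by the (large) algebraic degree $q^{m-1}$ of $L_\a$, but by the gcd $\gcd(q^{m-1}-1,q^m-1)=q-1$ which caps the number of nonzero solutions of $z^{q^{m-1}-1}=-x/y$. One should also note for later use (compare Remark \ref{rem_4} and Lemma \ref{prop9a}) that both possibilities $\rank=m-1$ and $\rank=m$ actually occur, depending on whether $-x/y$ lies in the image of $\phi$, i.e.\ whether $N(-x/y)=1$.
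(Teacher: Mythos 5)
Your proof is correct, but it takes a genuinely different route from the paper's. The paper disposes of this lemma in two lines: it applies the transposition automorphism $\top$, under which $f_{\a}$ with $\a=(x,0,\ldots,0,y)$ becomes $f_{\hat\a}$ with $\hat\a=(x,y^q,0,\ldots,0)$ — a form whose associated $q$-polynomial $xz+y^qz^q$ has symbolic degree $1$ — and then invokes Remark \ref{rem_5} together with Delsarte's Theorem 6.3, which bounds the rank of such forms from below. You instead work directly with $L_{\a}(z)=xz+yz^{q^{m-1}}$, whose large algebraic degree makes the naive root count useless, and recover the bound $|\Ker(L_{\a})|\le q$ from the computation $\gcd(q^{m-1}-1,q^m-1)=q-1$ applied to the equation $z^{q^{m-1}-1}=-x/y$. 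Both arguments are sound; yours is self-contained and elementary (no appeal to \cite{ds}), and it has the bonus of characterising exactly when the rank drops to $m-1$ (namely when $N(-x/y)=1$), which is consistent with the constant-rank behaviour of the sets $J_\alpha$ noted in Remark \ref{rem_4}. The paper's route is shorter and implicitly exploits the same phenomenon: raising $L_{\a}(z)=0$ to the $q$-th power turns it into the degree-$q$ equation $y^qz+x^qz^q=0$, which is essentially what the transpose trick achieves. One cosmetic caveat: you reuse the symbol $\phi$ for the power map $z\mapsto z^{q^{m-1}-1}$, which clashes with the field-reduction map $\phi$ of Section \ref{sec_4}; rename it to avoid confusion.
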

\begin{proof}
The  bilinear form $f_\a$, is equivalent to the bilinear form $f_{\hat \a}$, with $\hat\a=(x,y^q,0,\ldots,0)$, via the automorphism $\top$.
The result then follows from Remark \ref{rem_5} and Theorem 6.3 in \cite{ds}. 
\end{proof}
\begin{corollary}\label{prop_11a}
Let $a,b$ be nonzero elements in $\Fq$. Then $\rank(f_\a-f_{\b})\ge m-1$,  for any $\a\in J_a$ and $\b\in J_b$, with $\a\neq \b$ if a=b.
\end{corollary}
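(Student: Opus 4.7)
\textbf{Proof plan for Corollary \ref{prop_11a}.} The key structural observation is that every element of $J_\alpha$ is an $m$-tuple whose entries in positions $1, 2, \ldots, m-2$ vanish, so any $\Fq$-linear combination of tuples drawn from $J_a$ and $J_b$ retains this ``first-and-last'' shape $(x, 0, \ldots, 0, y)$. Writing $\a - \b = (x, 0, \ldots, 0, y)$ and using that the map $\c \mapsto f_\c$ of Corollary \ref{cor_3} is $\Fq$-linear, I get $f_\a - f_\b = f_{\a - \b}$. Lemma \ref{prop_11} then yields the desired bound $\rank(f_\a - f_\b) \ge m-1$ as soon as the tuple $(x, y)$ is nonzero, i.e., as soon as $\a \ne \b$.

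It only remains to verify $\a \ne \b$ under the hypotheses. When $a = b$ this is assumed directly. When $a \ne b$, I will show $J_a \cap J_b = \emptyset$. Suppose for contradiction that $(\lambda x, 0, \ldots, 0, -\lambda \alpha x^{q^{m-1}}) = (\mu y, 0, \ldots, 0, -\mu \beta y^{q^{m-1}})$ with $N(\alpha) = a$ and $N(\beta) = b$. Dividing the last-coordinate equation by the first gives $\alpha x^{q^{m-1}-1} = \beta y^{q^{m-1}-1}$, so $\beta/\alpha = (x/y)^{q^{m-1}-1}$. Taking norms over $\Fqm/\Fq$ and using that $N(x/y) \in \Fq^*$ together with $(q-1) \mid (q^{m-1}-1)$ forces $N(x/y)^{q^{m-1}-1} = 1$, so $N(\alpha) = N(\beta)$, contradicting $a \ne b$. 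This is essentially the same norm calculation used in the proof of Proposition \ref{prop_13}.

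Overall, the statement is a quick corollary of Lemma \ref{prop_11}, and there is no genuine obstacle. The only care required is in the ``shape preservation'' observation under subtraction and the short norm argument showing that the sets $J_a$ for distinct $a \in \Fq^*$ are pairwise disjoint.
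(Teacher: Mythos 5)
Your proof is correct and follows the route the paper intends: the difference of two tuples from $J_a\cup J_b$ has the shape $(x,0,\ldots,0,y)$, so Lemma \ref{prop_11} applies once one knows the difference is nonzero, and your norm computation showing $J_a\cap J_b=\emptyset$ for $a\neq b$ is the same calculation the paper records in Lemma \ref{prop9a} (via Proposition \ref{prop_13}). Nothing is missing.
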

\begin{lemma}\label{lem_5a}
 Let $a,b$ be distinct nonzero elements in $\Fq$. Then $\rank(f_{\a}-f_{\b})\ge m-1$ for any $\a\in\pi_a$ and $\b\in J_b$.
\end{lemma}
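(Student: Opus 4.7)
The plan is to mimic closely the structure of the proof of Lemma \ref{prop_9}. First I use the $(S\times S)$-action (Remarks \ref{rem_2} and \ref{rem_4}) together with the fact that both $\cF_{\pi_a}$ and $\cF_{J_b}$ are $(S\times S)$-invariant to reduce to the case
\[
\a=(1,\alpha,\alpha^{1+q},\ldots,\alpha^{1+q+\cdots+q^{m-2}}),\qquad N(\alpha)=a,
\]
with $\b=(\mu y,0,\ldots,0,-\mu\beta y^{q^{m-1}})$ a general element of $J_b$ and $N(\beta)=b$.

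Suppose, for contradiction, that $\rank(f_{\a-\b})\le m-2$; then there exist $z_1,z_2\in\Fqm$, linearly independent over $\Fq$, with $L_{\a-\b}(z_i)=0$ for $i=1,2$. Writing this out explicitly gives the analog of (\ref{eq_28}). Raising to the $q$-th power and using $z_i^{q^m}=z_i$ and $y^{q^m}=y$ yields the cyclically shifted equation, the analog of (\ref{eq_29}). Multiplying the shifted equation by $\alpha$ and subtracting the original collapses all middle coefficients exactly as in Lemma \ref{prop_9}, and leaves the three-term relation
\[
A z_i+B z_i^q+C z_i^{q^{m-1}}=0,\qquad i=1,2,
\]
where
\[
A=(a-1)+(\mu+\alpha\mu^q\beta^q)y,\quad B=-\alpha\mu^q y^q,\quad C=-\mu\beta y^{q^{m-1}}.
\]

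Hence the $m$-tuple $(A,B,0,\ldots,0,C)$ is a solution of the linear system (\ref{eq_34}) of Lemma \ref{prop_9}. Applying formula (\ref{eq_62}) with $x_3=\cdots=x_{m-1}=0$ gives $B=-\Delta^{-1} c_m C$, hence $c_m=-\Delta B/C$. Substituting the values of $B$ and $C$ and raising to the $q$-th power using the identity $c_m^q=-\Delta$ established in Lemma \ref{prop_9}, one obtains after simplification
\[
\Delta^{q-1}=\frac{\beta^q}{\alpha^q\,\mu^{q^2-q}\,y^{q^2-1}}.
\]
I then take the norm $N:\Fqm\to\Fq$ of both sides. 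On the left, $N(\Delta)^{q-1}=1$ since $N(\Delta)\in\Fq^*$. On the right, the factors $N(\mu)^{q(q-1)}$ and $N(y)^{(q-1)(q+1)}$ both equal $1$ because $N(\mu),N(y)\in\Fq^*$, so what remains is $b^q/a^q=b/a$. This forces $b=a$, contradicting the hypothesis $a\ne b$.

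The main obstacle is the algebraic bookkeeping required to derive the three-term equation, but this is exactly the same manipulation used in Lemma \ref{prop_9} and needs no new idea; the asymmetry between $\pi_a$ and $J_b$ only changes which middle terms vanish. Once the three-term relation is in place, the rest is a short and clean norm computation that isolates the ratio $b/a$ by exploiting that every nonzero element of $\Fq$ is a $(q-1)$-th root of unity.
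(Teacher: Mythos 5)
Your proof is correct and takes essentially the same route as the paper's: reduce $\a$ to its canonical $(S\times S)$-orbit representative, collapse the middle coefficients to get the three-term relation $Az_i+Bz_i^q+Cz_i^{q^{m-1}}=0$, and combine the relation $x_2=-\Delta^{-1}c_m\,x_3$ with the identity $c_m^q=-\Delta$ and a norm computation to force $a=b$. The paper phrases the last step as $N(x_2)=N(x_3)$ rather than taking the norm of $\Delta^{q-1}=C^q/B^q$, but this is the identical calculation.
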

\begin{proof}
By Remark \ref{rem_2}  we can assume $\a=(1,\alpha,\ldots,\alpha^{1+\ldots+q^{m-2}})$ with $N(\alpha)=a$. By  arguing as in the proof of Lemma \ref{prop_9} we see that the triple 
\begin{equation}\label{eq_33}
(a-1+ (\lambda+ \alpha\beta^q \lambda^q)x,-\alpha\lambda^qx^q,-\lambda \beta x^{q^{m-1}})
\end{equation}
 is a solution of the linear system
\begin{equation}\label{eq_27}
\left\{
\begin{array}{lcl}
z_1X_1+z_1^q X_2+z_1^{q^{m-1}} X_3 & = & 0\\
z_2X_1+z_2^q X_2+z_2^{q^{m-1}} X_3 & = & 0
\end{array}
\right.
\end{equation}
for some $z_1,z_2\in\Fqm$ linearly independent over $\Fq $ with $\Delta=\begin{vmatrix} z_1 & z_1^q\\z_2 & z_2^q\end{vmatrix}\neq 0$. Any  solution $(x_1,x_2,x_3)$ of (\ref{eq_27}) satisfies 
\[
x_2=-\dfrac{\Delta'}{\Delta}x_3
\]
where $\Delta'=\begin{vmatrix} z_1 & z_1^{q^{m-1}}\\z_2 & z_2^{q^{m-1}}\end{vmatrix}$. Since ${\Delta'}^q=\begin{vmatrix} z_1^q & z_1\\z_2^q & z_2\end{vmatrix}=-\Delta$ we get
$x_2=\dfrac{1}{{\Delta'}^{q-1}}x_3$ giving $N(x_2)=N(x_3)$. As a solution of (\ref{eq_27}), the triple (\ref{eq_33}) must satisfies $aN(\lambda)N(x)=b N(\lambda) N(x)$ giving either $\lambda x=0$ or $a=b$, a contradiction.
\end{proof}
Let $A_1=\{(x,0,0,\ldots,0):x\in\Fqm\setminus\{0\}\}$ and   $A_2=\{(0,0,0,\ldots,x):x\in\Fqm\setminus\{0\}\}$.
\begin{lemma}\label{prop_10}
 $\rank(f_\a)=m$,  for any $\a\in A_i$, $i=1,2$. Further, $\rank(f_\a-f_{\b})\ge m-1$, for any $\a\in A_1$ and $\b\in A_2$.
\end{lemma}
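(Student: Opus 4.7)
The plan is to appeal directly to the linearized-polynomial description of $\Rad(f_\a)$ given in Remark \ref{rem_3}, together with Lemma \ref{prop_11}.

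For the first claim, take $\a=(x,0,\ldots,0)\in A_1$ with $x\neq 0$. The $q$-polynomial associated with $\a$ is simply $L_\a(z)=xz$, which vanishes only at $z=0$. By Remark \ref{rem_3}, this forces $\Rad(f_\a)=\{0\}$, so $\rank(f_\a)=m$. An essentially identical argument covers $\a=(0,\ldots,0,y)\in A_2$ with $y\neq 0$: the $q$-polynomial is $L_\a(z)=yz^{q^{m-1}}$, whose only root in $\Fqm$ is $z=0$, so again $\rank(f_\a)=m$.

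For the second claim, I would use that the map $\nu$ of Corollary \ref{cor_3} is an $\Fq$-linear isomorphism of rank metric spaces, so
\[
\rank(f_\a-f_\b)=\rank(f_{\a-\b}).
\]
If $\a=(x,0,\ldots,0)\in A_1$ and $\b=(0,\ldots,0,y)\in A_2$, then $\a-\b=(x,0,\ldots,0,-y)$, a vector of precisely the shape covered by Lemma \ref{prop_11} (with $x$ and $-y$ not both zero). Hence $\rank(f_{\a-\b})\geq m-1$, which is the desired bound.

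The only step requiring genuine care is correctly reading off $L_\a$ from the Singer-basis encoding of $\a$; beyond that, the whole statement is an immediate consequence of results already established earlier in the section, so there is no real obstacle.
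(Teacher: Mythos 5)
Your proposal is correct and essentially matches the paper's own (very brief) proof: the paper verifies the first claim by noting that the Dickson matrix $D_\a$ is invertible for $\a\in A_i$, which is the same fact you express via the $q$-polynomial $L_\a$ having only the trivial root, and the second claim is reduced to Lemma \ref{prop_11} in exactly the way you describe.
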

\begin{proof}
The first part can be easily proved by taking the Dickson matrix $D_\a$ with $\a\in A_i$. The second part follows from  Lemma \ref{prop_11}.
\end{proof}
\begin{lemma}\label{lem_10}
Let $a\in\Fq\setminus\{0,1\}$. Then  $\rank(f_\a-f_{\b})\ge m-1$, for any $\a\in\pi_a$  and $\b\in A_i$, $i=1,2$. 
\end{lemma}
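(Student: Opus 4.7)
\emph{Strategy.} The plan is to normalize $\a$ using the Singer group action and then exploit a single functional identity for $L_\a$ to collapse the hypothetical equations $L_{\a-\b}(z_i)=0$ to a two-term linearized polynomial.

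\emph{Normalization.} First I would observe that, in addition to $\cF_{\pi_a}$ (Remark \ref{rem_2}), each of $\cF_{A_1}$ and $\cF_{A_2}$ is a single $(S\times S)$-orbit: conjugating the Dickson matrix of $(1,0,\ldots,0)$ (the identity matrix) or of $(0,\ldots,0,1)$ by $\diag(\mu^i,\mu^{iq},\ldots,\mu^{iq^{m-1}})$ and $\diag(\mu^j,\mu^{jq},\ldots,\mu^{jq^{m-1}})$ from left and right only scales the unique nonzero parameter by $\mu^{i+j}$ or $\mu^{i+jq^{m-1}}$. Since the $(S\times S)$-action preserves the rank of $f_{\a-\b}$, I may assume $\a=(1,\alpha,\alpha^{1+q},\ldots,\alpha^{1+q+\cdots+q^{m-2}})$ with $N(\alpha)=a$, with $\b$ still arbitrary in $A_i$.

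\emph{Key identity.} The crux is the identity
\[
L_\a(z) \;=\; (1-a)\,z \;+\; \alpha\,L_\a(z)^q,
\]
which I would establish by direct expansion: multiplying $L_\a(z)^q$ by $\alpha$ turns the coefficient $\alpha^{q(q^k-1)/(q-1)}$ of $z^{q^{k+1}}$ into $\alpha^{(q^{k+1}-1)/(q-1)}$, reproducing the higher coefficients of $L_\a$ for $k=0,\ldots,m-2$, while the top term ($k=m-1$) collapses via $z^{q^m}=z$ to $\alpha^{(q^m-1)/(q-1)}z=N(\alpha)\,z=az$.

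\emph{Contradiction.} Suppose $L_{\a-\b}(z_i)=0$ for $z_1,z_2\in\Fqm$ linearly independent over $\Fq$. If $\b=(y,0,\ldots,0)\in A_1$, then $L_\a(z_i)=y z_i$, and substituting in the identity gives the two-term equation $\alpha y^q\,z^q+(1-a-y)\,z=0$ at both $z_i$. Since a polynomial $Az^q+Bz$ has $\Fq$-kernel of dimension at most $\gcd(1,m)=1$ unless $A=B=0$, we are forced to $y=0$ and $a=1$, contradicting $a\ne 1$. If $\b=(0,\ldots,0,y)\in A_2$, then $L_\a(z_i)=y z_i^{q^{m-1}}$, and the identity together with $z_i^{q^m}=z_i$ gives $-y\,z^{q^{m-1}}+(1-a+\alpha y^q)\,z=0$; the same dimension bound forces $y=0$, contradicting $\b\in A_2$. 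The main obstacle is spotting the functional identity for $L_\a$; once it is in hand, the proof is mechanical and avoids the long determinantal computation employed in Lemma \ref{prop_9}.
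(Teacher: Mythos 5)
Your proof is correct and is essentially the paper's own argument: the identity $L_\a(z)=(1-a)z+\alpha L_\a(z)^q$ is precisely the ``subtract the equation from $\alpha$ times its $q$-th power'' manipulation that the paper imports from Lemma \ref{prop_9}, and deducing $A=B=0$ from two $\Fq$-independent roots of a binomial $Az^{q^s}+Bz$ is the paper's statement that the $2\times 2$ Moore-type system with $\Delta\neq 0$ has only the trivial solution. The one spot deserving an extra word is the $A_2$ case, where the binomial has $q$-degree $m-1$ rather than $1$, so the naive root-count bound is not enough; you need $\gcd(m-1,m)=1$ (or, equivalently, first raise the relation to the $q$-th power to reduce it to $C^qz^q-y^qz=0$) to conclude the kernel is at most one-dimensional over $\Fq$.
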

\begin{proof}
 By Remark \ref{rem_2} we can assume $\a=(1,\alpha,\ldots,\alpha^{1+\ldots+q^{m-2}})$ with $N(\alpha)=a$. Let $\b=(x,0,\ldots, 0)$. By  proceeding as in the proof of Lemma \ref{prop_9} we see the pair $(a-(1-x),-\alpha x^q)$ is a solution of the linear system
\[\left\{
\begin{array}{lcl}
z_1X_1+z_1^q X_2 & = & 0\\
z_2X_1+z_2^q X_2 & = & 0
\end{array}
\right.
\]
with  $\Delta=\begin{vmatrix} z_1 & z_1^q\\z_2 & z_2^q\end{vmatrix}\neq 0$. Then the above linear system has the unique solution $(0,0)$ giving $x=0$ and $a=1$, a contradiction.

For $i=2$, similar arguments  lead to the same contradiction.
\end{proof}
\begin{lemma}\label{lem_11}
Let $a\in\Fq\setminus\{0\}$. Then  $\rank(f_\a-f_{\b})\ge m-1$, for any $\a\in J_a$  and $\b\in A_i$, $i=1,2$. 
\end{lemma}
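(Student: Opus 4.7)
The plan is to reduce this to Lemma \ref{prop_11} by observing that $\a-\b$ again has support only in positions $0$ and $m-1$, so the difference is exactly of the shape to which Lemma \ref{prop_11} applies.

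More precisely, I would write $\a=(\lambda x,0,\ldots,0,-\lambda\alpha x^{q^{m-1}})\in J_a$ with $\lambda,x\in\Fqm\setminus\{0\}$ and $N(\alpha)=a$. For $i=1$, take $\b=(y,0,\ldots,0)\in A_1$ with $y\in\Fqm\setminus\{0\}$; then
\[
\a-\b=(\lambda x-y,\,0,\ldots,0,\,-\lambda\alpha x^{q^{m-1}}).
\]
For $i=2$, take $\b=(0,\ldots,0,y)\in A_2$; then
\[
\a-\b=(\lambda x,\,0,\ldots,0,\,-\lambda\alpha x^{q^{m-1}}-y).
\]
In both cases the $m$-tuple $\a-\b$ has entries zero in all positions $1,\ldots,m-2$, and at least one of its first and last entries is nonzero: in case $i=1$ the last entry $-\lambda\alpha x^{q^{m-1}}$ is nonzero because $\lambda,\alpha,x$ are all nonzero, and in case $i=2$ the first entry $\lambda x$ is nonzero.

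Hence $\a-\b$ satisfies the hypothesis of Lemma \ref{prop_11}, which directly yields $\rank(f_{\a-\b})=\rank(f_\a-f_\b)\ge m-1$. There is no real obstacle here: the statement is essentially an immediate corollary of Lemma \ref{prop_11}, since subtracting an element of $A_1$ or $A_2$ from an element of $J_a$ can only modify the two extremal entries of the generating $m$-tuple, leaving the support in $\{0,m-1\}$ intact.
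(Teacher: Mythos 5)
Your proof is correct and is exactly the paper's argument: the paper's proof of this lemma is the single line ``Use Lemma \ref{prop_11}'', and you have simply spelled out why the difference $\a-\b$ still has support only in the first and last coordinates with at least one nonzero entry, so that Lemma \ref{prop_11} applies.
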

\begin{proof}
Use Lemma \ref{prop_11}.
\end{proof}

Finally, we have the main theorem.

\begin{theorem}\label{th_1}
Let $q>2$ be a prime power and $m\ge 3$ a positive integer. For any subset  $I$  of $\Fq\setminus\{0,1\}$, put  $\Pi_I=\bigcup_{a\in I}{\pi_a}$, $\Gamma_I=\bigcup_{b\in \Fq \setminus (I\cup\{0\})}{J_b}$ and set
\[
\cA_{m,q;I}=\Pi_I\cup\Gamma_I  \cup A_1\cup A_2\cup \{\0\}
\]
where $\0$ is the zero $m-$ple. Then the  subset  $\cF_{m,q;I}=\{f_\a:\a\in \cA_{m,q;I}\}$ of $\Omega$  is a non-linear $(m,m,q;m-2)$-MRD code. 
\end{theorem}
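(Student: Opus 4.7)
The plan is to establish three properties of $\cF_{m,q;I}$: that $|\cF_{m,q;I}|=q^{2m}$, that the minimum rank distance is at least $m-1$, and (assuming $I\ne\emptyset$) that $\cF_{m,q;I}$ is not closed under addition. For the cardinality I would first verify that the five pieces of $\cA_{m,q;I}$ are pairwise disjoint. Distinct $\pi_a$ are disjoint by Proposition \ref{prop_13} and distinct $J_b$ by Lemma \ref{prop9a}, while across pieces the disjointness is read off from the zero pattern of the $m$-tuples (using $m\ge 3$): elements of $\pi_a$ have all coordinates nonzero, those of $J_b$ have zeros in exactly the $m-2$ middle positions, and $A_1,A_2,\{\0\}$ have still more zeros. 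Then, combining $|\pi_a|=|J_b|=(q^m-1)^2/(q-1)$ from Lemmas \ref{lem_6} and \ref{prop9a}, a direct count gives
\[
|\cF_{m,q;I}|=(q-1)\cdot\frac{(q^m-1)^2}{q-1}+2(q^m-1)+1=q^{2m}.
\]

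For the distance bound I would split on where the two elements $\a,\b$ sit. The case $\b=\0$ reduces to the single-element claim $\rank(f_\a)\ge m-1$ for every nonzero $\a\in\cA_{m,q;I}$: Lemma \ref{prop_10} handles $\a\in A_1\cup A_2$ and Lemma \ref{prop_11} handles $\a\in\Gamma_I$. For $\a\in\pi_a$ with $a\ne 1$, Remark \ref{rem_2} lets me assume $\a=(1,\alpha,\ldots,\alpha^{1+q+\cdots+q^{m-2}})$ with $N(\alpha)=a$; using $z^{q^m}=z$ and $\alpha^{1+q+\cdots+q^{m-1}}=a$, a one-line computation yields the identity $\alpha\,L_\a(z)^q=L_\a(z)+(a-1)z$, so $L_\a(z)=0$ forces $(a-1)z=0$, hence $z=0$ and $\rank(f_\a)=m$. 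All remaining pairs of distinct nonzero elements fall into cases already settled by Lemma \ref{prop_9}(ii) (both in $\Pi_I$), Corollary \ref{prop_11a} (both in $\Gamma_I$), Lemma \ref{lem_5a} (one in each), Lemmas \ref{lem_10} and \ref{lem_11} (one in $\Pi_I\cup\Gamma_I$, one in $A_1\cup A_2$), and Lemma \ref{prop_10} (one in $A_1$, one in $A_2$); two distinct elements of the same $A_i$ have difference again in $A_i$, so Lemma \ref{prop_10} applies once more.

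For non-linearity I would exhibit an explicit failure of closure. Take $\a=(x,0,\ldots,0)\in A_1$ and $\b=(0,\ldots,0,y)\in A_2$ with $x,y\in\Fqm^*$. Solving the defining parametrisation of $J_\alpha$ shows that $(x,0,\ldots,0,y)\in J_\alpha$ iff $\alpha=-y\,\lambda^{q^{m-1}-1}/x^{q^{m-1}}$ for some $\lambda\in\Fqm^*$; since $N(\lambda)\in\Fq^*$ satisfies $N(\lambda)^{q^{m-1}-1}=1$, the quantity $N(\alpha)=(-1)^m N(y)/N(x)\in\Fq^*$ is independent of the parametrisation. Surjectivity of $N:\Fqm^*\to\Fq^*$ together with $I\ne\emptyset$ then lets me choose $x,y$ so that $(-1)^m N(y)/N(x)\in I$, which forces $\a+\b\in J_c$ with $c\in I$, and hence $\a+\b\notin\cA_{m,q;I}$. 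The one part of the argument I anticipate to be genuinely new is the single-element rank calculation $\rank(f_\a)=m$ for $\a\in\pi_a$ with $a\ne 1$, via the above identity on $L_\a$; the rest of the proof is a disjoint-union count and a routine matching of cases to the lemmas already established in this section.
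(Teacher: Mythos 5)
Your proposal is correct and its skeleton is the paper's own: the same disjointness-and-count argument giving $|\cA_{m,q;I}|=q^{2m}$, the same assignment of each pair of components to Lemmas \ref{prop_9}, \ref{prop_11}, \ref{lem_5a}, \ref{prop_10}, \ref{lem_11} and Corollary \ref{prop_11a}, and non-linearity via an explicit sum that escapes $\cA_{m,q;I}$. Two deviations are worth noting. First, you single out the pair $\{f_\a,f_{\0}\}$ with $\a\in\pi_a$, $a\in I$, which is the one case the paper's list of cited lemmas does not literally cover (Lemma \ref{prop_9}(i) concerns only $\pi_1$, and part (ii) requires the second form to lie in some $\pi_b$ with $b\ne 0,1$, not to be zero). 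Your identity $\alpha L_\a(z)^q=L_\a(z)+(a-1)z$ for the representative $\a=(1,\alpha,\ldots,\alpha^{1+q+\cdots+q^{m-2}})$ is correct (it uses only $z^{q^m}=z$ and $\alpha^{1+q+\cdots+q^{m-1}}=a$) and gives $\Rad(f_\a)=0$, hence $\rank(f_\a)=m$; combined with the constant-rank observation of Remark \ref{rem_2} this cleanly fills a step the paper passes over. Second, your non-linearity witness differs: the paper adds an element of $A_2$ to one of $\pi_a$, $a\in I$, and shows the sum lies in no $\pi_b$ (its nonzero middle coordinates rule out the remaining components), whereas you add an element of $A_1$ to one of $A_2$ and compute that the sum $(x,0,\ldots,0,y)$ lands in $J_c$ with $c=(-1)^mN(y)/N(x)$, which by surjectivity of the norm can be steered into $I$ and is therefore excluded from $\Gamma_I$. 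Both witnesses work, and both require $I\ne\emptyset$ (for $I=\emptyset$ the code is the linear subspace of all $(x,0,\ldots,0,y)$, so the theorem's claim of non-linearity silently assumes $I$ nonempty, as Theorems \ref{th_4} and \ref{th_5} make explicit); your version has the small advantage of stating that hypothesis and the norm partition of $A_1+A_2$ into the sets $J_c$ in full.
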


\begin{proof}
By Lemmas \ref{lem_6}, \ref{prop9a}  we get that $\cA_{m,q;I}$ has size $q^{2m}$. 
By Lemmas \ref{prop_9}, \ref{prop_11}, \ref{lem_5a}, \ref{prop_10}, \ref{lem_10} and Corollary \ref{prop_11a}, we see that  $\cF_{m,q;I}$  has minimum  distance  $m-1$, i.e. it is a $(m,m,q;m-2)$-MRD code. To show the non-linearity of $\cF_{m,q;I}$, it suffices to find  two distinct elements in it whose $\Fq$-span is not contained in $\cF_{m,q;I}$.

Let $f_\a\in\cF_{A_2}$ and $f_\b\in \cF_{\pi_a}$, $a\in I$. By corollary \ref{cor_3}, we can work with the Dickson matrices $D_\a$ and $D_\b$, or equivalently, with $m$-ples $\a$ and $\b$ as arrays in $V(m,q^m)$. Let $\a=(0,\ldots,0,\mu)$ and $\b=(\lambda x,\lambda \alpha x^q,\ldots,\lambda \alpha^{1+\ldots+q^{m-2}}x^{q^{m-1}})$. Suppose $\a+\b\in\pi_b$, for some  $b\in\Fq$. Then 
\[
\left(\frac{\lambda \alpha^{1+\ldots+q^{m-3}}x^{q^{m-2}}}{\lambda \alpha^{1+\ldots+q^{m-4}}x^{q^{m-3}}}\right)^q= \alpha^{q^{m-2}}x^{q^{m-1}-q^{m-2}}=\frac{\mu+\lambda \alpha^{1+\ldots+q^{m-2}}x^{q^{m-1}}}{\lambda \alpha^{1+\ldots+q^{m-3}}x^{q^{m-2}}}
\]
 giving $\mu=0$. Therefore, the subspace spanned by $\a$ and $\b$ meets trivially every $\pi_b$ if $b\neq a$, or just in the 1-dimensional subspace spanned by $\b$ if $b=a$. The result then follows.
\end{proof}

%
%
%
%

\section{A geometric description for the  non-linear  MRD codes}\label{sec_4}

For any  $v\in V(t,q^s)\setminus\{0\}$,  $[v]$ will denote the point of  $\PG(t-1,q^s)$ defined by $v$ via  the canonical homomorphism $\psi:\GL(V(t,q^s))\mapsto \PGL(t,q^s)$. For any subset $X$ of  $V(t,q^s)\setminus\{0\}$, we set $[X]=\{[v]:v \in X, v\neq 0\}$.   The set $[X]$ is said to be an  $\Fq${\em-linear set of rank} $r$ if $X$ is an $r$-dimensional $\Fq$-linear subspace of $V(t,q^s)$. An $\Fq$-linear set $[X]$ of rank $r$ is said to be  {\em scattered} if the size of $[X]$ equals $|\PG(r-1,q)|$; see \cite{pol} for more  details on $\Fq$-linear sets
and \cite{lun} for a relationship between linear MRD-codes and $\Fq$-linear sets. 

Consider the set $\cA_{m,q;I}$ defined in Theorem \ref{th_1} as a subset of $\widehat V=V(m,q^m)$, by setting $a_0v_1+a_1v_2+\ldots+a_{m-1}v_m$, for any $\a=(a_0,\ldots,a_{m-1})\in\cA_{m,q;I}$; here, $v_1,\ldots, v_m$ is the Singer basis of $V$ defined in Section \ref{sec_3}. Therefore,    $ [\pi_1]=[V]$ is a  scattered $\Fq$-linear set  of rank $m$ of $\PG(m-1,q^m)$  isomorphic to  the projective space $\PG(m-1,q)$. 

For any $\alpha\in\Fqm\setminus\{0\}$, the endomorphism 
\[
\begin{array}{rccc}
\tau_\alpha: & \widehat V & \rightarrow &  \widehat V\\
 & a_0v_1+a_1v_2+\ldots+a_{m-1}v_m & \mapsto  & a_0v_1+\alpha a_1v_2+\ldots+\alpha^{1+\ldots+q^{m-2}}a_{m-1}v_m
\end{array}
\]
maps $\pi_1$ into $\pi_a$, with $a=N(\alpha)$, and $J_1$ into $J_b$, with $b=a^{m-1}$. 
 
Let $W$ be the span of $v_1$ and $v_m$ in $\widehat V$. For any $a\in\Fq\setminus\{0\}$,  $[ J_a]$ is a scattered $\Fq$-linear set of rank $m$ of  $[W]$.
In particular  $[J_a]$ is a maximum scattered  $\Fq$-linear set of  pseudoregulus type of 
$[W]$ \cite{lvdv,mpt}.

Summarizing we have  the following result.

\begin{theorem}\label{th_4}
Let $q>2$ be a prime power and $m>2$ a positive integer. Let  $I$  be any nonempty subset of $\Fq\setminus\{0,1\}$ with $k=|I|$. Then, the projective image of  $\cA_{m,q;I}$ in $\PG(m-1,q^m)$  is  union of  two points $[A_1], [A_2]$,  $k$ mutually disjoint $(m-1)$-dimensional $\Fq$-subgeometries $[\pi_a]$, $a\in I$, and  $q-1-k$ mutually disjoint $\Fq$-linear sets $[J_b],  b \in \Fq\setminus (I\cup\{0\})$,  of pseudoregulus type of rank $m$ contained in the line  spanned by  $[A_1]$ and $ [A_2]$.
\end{theorem}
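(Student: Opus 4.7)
The plan is to verify the geometric description by decomposing
\[
\cA_{m,q;I}=\Pi_I\cup\Gamma_I\cup A_1\cup A_2\cup\{\0\}
\]
and identifying the projective image of each constituent piece separately, working throughout in the Singer basis $v_1,\ldots,v_m$ of $V$. The zero vector disappears under $\psi$, while $A_1$ and $A_2$, being the nonzero $\Fqm$-multiples of $v_1$ and $v_m$ respectively, project to the two single points $[v_1]$ and $[v_m]$; these automatically span the projective line $[W]$ with $W=\langle v_1,v_m\rangle_{\Fqm}$.

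For the subgeometries $[\pi_a]$, the parameter $\lambda$ in the definition of $\pi_\alpha$ is an $\Fqm^*$-scalar that disappears under projectivisation; setting $\lambda=1$ in $\pi_1$ reproduces the cyclic representation $(\ref{eq_53})$ of $V$, so that $[\pi_1]=[V]$ is the canonical $\Fq$-subgeometry of $\PG(m-1,q^m)$ isomorphic to $\PG(m-1,q)$. Since $\tau_\alpha$ is $\Fqm$-linear and bijective for $\alpha\neq 0$, it induces a projectivity of $\PG(m-1,q^m)$, and the explicit formula preceding the statement confirms that it sends $[\pi_1]$ to $[\pi_a]$ with $a=N(\alpha)$; this exhibits each $[\pi_a]$ as an $(m-1)$-dimensional $\Fq$-subgeometry. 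Since $\tau_\alpha\circ\tau_{\alpha'}=\tau_{\alpha\alpha'}$, any point of $[\pi_a]\cap[\pi_{a'}]$ would correspond to a point of $[V]\cap\tau_\gamma([V])$ with $\gamma=\beta/\alpha$ (where $N(\alpha)=a$, $N(\beta)=a'$); equating projective coordinates in the Singer basis collapses the $m$ coordinate equations into the single equation $\mu^{q-1}=\gamma$, which is solvable in $\Fqm^*$ if and only if $N(\gamma)=1$, i.e.\ $a=a'$.

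For the linear sets $[J_b]$, every nonzero element of $J_\alpha$ has only its first and last coordinates nonzero, so $[J_a]\subseteq[W]$, and neither $[v_1]$ nor $[v_m]$ lies in $[J_a]$ since $\alpha\neq 0$ forces the relevant coordinate to be nonzero. Projectively $[J_a]=\{[xv_1-\alpha x^{q^{m-1}}v_m]:x\in\Fqm^*\}$; substituting $x=y^q$ (so that $x^{q^{m-1}}=y$) exhibits $[J_a]$ as the projective image of the $\Fq$-subspace $U_\alpha=\{y^qv_1-\alpha yv_m:y\in\Fqm\}$ of $W$, which has $\Fq$-rank $m$. Together with the cardinality $|[J_a]|=(q^m-1)/(q-1)$ from Lemma \ref{prop9a}, this identifies $[J_a]$ as a scattered $\Fq$-linear set of rank $m$ in $[W]$ whose normal form matches the pseudoregulus type defined in \cite{lvdv,mpt}; matching this form carefully against the reference definition is the main bookkeeping point of the argument, while everything else is a direct calculation. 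Pairwise disjointness of $[J_a]$ and $[J_{a'}]$ then follows from equating two representatives, which forces $\mu^{q^{m-1}-1}=\beta/\alpha$ for some $\mu\in\Fqm^*$; taking norms and using that $q-1$ divides $q^{m-1}-1$ gives $N(\beta)/N(\alpha)=1$, and hence $a=a'$.
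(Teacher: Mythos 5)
Your proposal is correct and follows essentially the same route as the paper: the paper's argument is precisely the summary preceding the theorem, identifying $[\pi_1]=[V]$ as the canonical subgeometry, each $[\pi_a]$ as its image under the projectivity induced by $\tau_\alpha$ with $N(\alpha)=a$, and each $[J_b]$ as a maximum scattered linear set of pseudoregulus type on $[W]=\langle [A_1],[A_2]\rangle$. The only difference is that you verify the pairwise disjointness by direct coordinate computations (reducing to $\rho^{q-1}=\gamma$ and $\rho^{q^{m-1}-1}=\beta/\alpha$ and taking norms), whereas the paper leaves this implicit in Proposition \ref{prop_13}, Lemma \ref{prop9a} and the $(S\times S)$-orbit structure of Remarks \ref{rem_2} and \ref{rem_4}; both are sound.
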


We now investigate the geometry in $\PG(m^2-1,q)$ of the projective set defined by each   MRD code  $\cF_{m,q;I}$ viewed as a subset of $V(m^2,q)$.

Let $V=V(m,q)$ be the $\Fq$-span of $u_1,\ldots, u_m$ and set $\widehat V=V(m,q^m)=\Fqm\otimes V(m,q)$.  The {\em rank} of a   vector $v=a_1  u_1+a_2  u_2+\ldots+a_{m}  u_m\in \widehat V$ by definition is the maximum number of linearly independent coordinates $a_i$ over $\Fq$.

 If we consider  $\Fqm$ as  the $m$-dimensional vector space $V$,  then  every  $\alpha\in\Fqm$ can be uniquely written as $\alpha=x_1  u_1+x_2  u_2+\ldots+x_{m}  u_m$, with $x_i\in\Fq$. Hence, $\widehat V$ can be viewed as $V\otimes V$, the tensor product of $V$ with itself, with basis $\{u_{(i,j)}=u_i\otimes u_j:i,j=1,\ldots,m\}$. Elements of $V\otimes V$  are called {\em tensors} and those of  the form $v\otimes v'$, with $v,v' \in V$ are called {\em fundamental tensors}. In $\PG(V\otimes V)$, the set of fundamental tensors correspond to the Segre variety $\cS_{m,m}(\Fq)$ of $\PG(V\otimes V)$ \cite{ht}.  

Let $\phi$ be the map defined by 

\[
\begin{array}{lccc}
\phi=\phi_{\{u_1,\ldots, u_m\}}: & \widehat V & \longrightarrow & V\otimes V \\
 & \alpha_1 u_1+\ldots +\alpha_{m} u_m & \longmapsto  & \sum_{i=1}^{m}{x_{i1}\,u_{(i,1)}}+\ldots+\sum_{i=1}^{m}{x_{im}\,u_{(i,m)}},
 \end{array}
\]
with $\alpha_k=x_{1k} u_1+x_{2k} u_2+\ldots+x_{mk} u_m$,  $x_{ik}\in\Fq$. We call this map  the {\em field reduction of }  $\widehat V$ {\em over} $\Fq$ {\em with respect to the  basis}  $u_1,\ldots, u_m$.  
  The projective space $\PG(V\otimes V)$   is  the {\em the field reduction} of $\PG(\widehat V)$ {\em over} $\Fq$ {\em with respect to the  basis} $u_1,\ldots, u_m$.

 Under the map $\phi$, every 1-dimensional   subspace $\<v\>$  of $\widehat V$ is mapped  to the $m$-dimensional  $\Fq$-subspace $k_{v}=\phi(\<v\>)$ of $V\otimes V$. It turns out that the set $\cK=\{k_v: v\in \widehat V, v\neq 0\}$ is 
 a partition of the nonzero vectors  of $V\otimes V$.
 In particular $\cK$ is a  {\em Desarguesian} partition, i.e.  the stabilizer of $\cK$ in $\GL(V\otimes V)$  contains a cyclic subgroup acting regularly on the components of $\cK$ \cite{segre}, \cite{dye}.


To any component  $k_v$ of $\cK$ there corresponds a projective $(m-1)-$dimensional subspace $[k_v]$ of $\PG(V\otimes V)$. The set  $\cS=\{[k_v]: v\in \widehat V, v\neq 0\}$  is so called a {\em Desarguesian  $(m-1)-$spread} of $\PG(V\otimes V)$ \cite{segre}, \cite{dye}.

In addition, the projective set of $\PG(V\otimes V)$ corresponding to the  $\phi$-image of the  1-dimensional subspaces spanned by non-zero vectors in $V$ is the Segre variety $\cS_{m,m}(\Fq)$.

Let $\nu$ be the map  defined by 
\[
\begin{array}{lccc}
\nu=\nu_{\{u_1,\ldots, u_m\}}:&  V\otimes  V & \longrightarrow & M_{m,m}(\Fq)\\
&  \sum_{i,j}{x_{ij}u_{(i,j)}}  & \longrightarrow   & (x_{ij})_{i,j=1,\ldots,m}.
 \end{array}
\]

For every $v=\alpha_1 u_1+\ldots+\alpha_{m} u_m\in \widehat V$,  the $k$-th column of the matrix 
 $\nu(\phi(v))$   is the $m$-ple $(x_{1k}, \ldots,x_{mk})$ of the coordinates of  $\alpha_{k}$ with respect to the basis $u_1,\ldots,u_m$ of $\Fqm$. From \cite{gab}, the rank of $v$ equals the rank of  $\nu(\phi(v))$, for all $v\in\widehat V$.  In addition, the $\nu$-image of fundamental tensors is precisely the set of   rank 1 matrices.

\begin{remark}
Evidently, $\nu$ is an isomorphism of rank metric  spaces which also provides  an isomorphism between  the field reduction  $V\otimes V$ of $\widehat V$ with respect to  $u_1,\ldots, u_m$ and  the metric space  $\Omega$ of all bilinear forms on $V=\<u_1,\ldots,u_m\>_{\Fq}$.  
\end{remark}

Now embed $V\otimes V$ into $\widehat V\otimes \widehat V$ by extending the scalars from $\Fq$ to $\Fqm$. By taking a Singer basis $v_1,\ldots,v_m$ of  $V$ defined by the Singer cycle $\sigma$, Cooperstein \cite{coop} defined a  cyclic model  
for $V\otimes V$ within $\widehat V\otimes \widehat V$ with basis $v_{(i,j)}=v_i\otimes v_j$, $i,j=1,\ldots, m$. Let 
\[
\Phi(j)=\{\sum_{i=1}^{m}{a^{q^{i-1}}v_{(i,j-1+i)}}:a \in \Fqm\},
\]
where the subscript $j-1+i$ is taken modulo $m$.
As an $\Fq$-space, $\Phi(j)$ has dimension $m$ and by consideration on dimension we have
\[
V\otimes V=\bigoplus_{j=1}^{m}{\Phi(j)};
\]
see \cite{coop}. We call this representation the {\em cyclic representation of the tensor product} $V\otimes V$.

\begin{proposition}
Let $\widetilde\phi$ be the map defined by
\[
\begin{array}{lccc}
\widetilde\phi=\phi_{\{v_1,\ldots, v_m\}}: & \widehat V & \longrightarrow & \widehat V\otimes \widehat V \\
 & \alpha_1 v_1+\ldots +\alpha_{m} v_m & \longmapsto  & \sum_{i=1}^{m}{\alpha_1^{q^{i-1}}v_{(i,i)}}+\ldots+\sum_{i=1}^{m}{\alpha_{m}^{q^{i-1}}v_{(i,m-1+i)}}.
 \end{array}
 \]
 Then $\Im(\widetilde\phi)$ is linearly equivalent to $\Im(\phi)$ in $\widehat V\otimes \widehat V$.
 \end{proposition}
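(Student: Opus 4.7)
The plan is to view both $\Im(\phi)$ and $\Im(\widetilde\phi)$ as $\Fq$-subspaces of the $\Fqm$-vector space $\widehat V\otimes \widehat V$ (which has $\Fqm$-dimension $m^2$), show each has $\Fq$-dimension $m^2$ and full $\Fqm$-span, and then invoke the general principle that two such $\Fq$-forms are always related by an $\Fqm$-linear automorphism of the ambient space.

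For $\Im(\phi)$ this is immediate: $\phi$ realizes the natural $\Fq$-isomorphism $\widehat V=\Fqm\otimes_{\Fq}V\simeq V\otimes_{\Fq}V$ induced by the $\Fq$-basis $u_1,\ldots,u_m$ of $\Fqm$, so $\Im(\phi)=V\otimes_{\Fq}V$ is the $\Fq$-span of $\{u_i\otimes u_j\}$, which is simultaneously an $\Fqm$-basis of $\widehat V\otimes \widehat V$. The substantive work is the analogue for $\widetilde\phi$. First I would verify that $\widetilde\phi$ is $\Fq$-linear (each coordinate map $\alpha_j\mapsto \alpha_j^{q^{i-1}}$ is a Frobenius power, hence $\Fq$-linear) and injective (if $\widetilde\phi(v)=0$, reading off the coefficient of $v_1\otimes v_j$ in the $j$th block gives $\alpha_j=0$); this forces $\dim_{\Fq}\Im(\widetilde\phi)=m^2$. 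By the very definition, $\widetilde\phi(\sum_j\alpha_j v_j)$ lies in $\bigoplus_j\Phi(j)$, and as each $\alpha_j$ ranges independently over $\Fqm$ the $j$th summand fills out $\Phi(j)$, so $\Im(\widetilde\phi)=\bigoplus_j\Phi(j)$.

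For the $\Fqm$-span computation I would fix an $\Fq$-basis $\omega_1,\ldots,\omega_m$ of $\Fqm$ and note that $\Phi(j)$ has $\Fq$-basis $\bigl\{\sum_i\omega_k^{q^{i-1}}v_i\otimes v_{j-1+i}\bigr\}_{k=1}^{m}$, whose coefficient matrix $(\omega_k^{q^{i-1}})_{k,i}$ is a Moore matrix, nonsingular over $\Fqm$ precisely because $\omega_1,\ldots,\omega_m$ are $\Fq$-linearly independent. Hence the $\Fqm$-span of $\Phi(j)$ coincides with the $\Fqm$-span of the diagonal $\{v_i\otimes v_{j-1+i}\}_i$; as $j$ ranges over $\{1,\ldots,m\}$ these $m$ diagonals partition the $\Fqm$-basis $\{v_i\otimes v_l\}_{i,l}$ of $\widehat V\otimes \widehat V$, so $\Im(\widetilde\phi)$ indeed has full $\Fqm$-span.

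The linear equivalence then follows from the elementary observation that in any $\Fqm$-vector space $W$ of $\Fqm$-dimension $n$, an $\Fq$-subspace $M$ with $\dim_{\Fq}M=n$ and $\langle M\rangle_{\Fqm}=W$ has every $\Fq$-basis automatically being an $\Fqm$-basis of $W$; sending an $\Fq$-basis of $\Im(\phi)$ to one of $\Im(\widetilde\phi)$ therefore defines an $\Fqm$-linear automorphism of $\widehat V\otimes \widehat V$ carrying one image onto the other. The main obstacle is the Moore-matrix $\Fqm$-span computation together with the combinatorial bookkeeping of the diagonals $\{v_i\otimes v_{j-1+i}\}_i$; beyond that the conclusion is formal. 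Alternatively, one can build the equivalence explicitly diagonal-by-diagonal, using the Moore matrix $(\omega_k^{q^{i-1}})$ as the change of basis on each $\Fqm$-subspace $\langle v_i\otimes v_{j-1+i}\rangle_{\Fqm}$, and then assembling these maps into a single automorphism of $\widehat V\otimes \widehat V$.
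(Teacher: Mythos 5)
Your proof is correct for the statement as literally written, but it takes a genuinely different route from the paper's. You argue abstractly: both images are $\Fq$-subspaces of $\Fq$-dimension $m^2$ whose $\Fqm$-spans fill $\widehat V\otimes\widehat V$ (for $\Im(\widetilde\phi)$ this rests on your Moore-matrix computation showing $\langle\Phi(j)\rangle_{\Fqm}=\langle v_i\otimes v_{j-1+i}:i\rangle_{\Fqm}$, which is sound), and then any $\Fq$-basis of such a subspace is automatically an $\Fqm$-basis of the ambient space, so matching up bases yields an $\Fqm$-automorphism carrying one image onto the other. The paper instead proceeds constructively: writing $\tau$ for the change of basis of $\widehat V$ from $u_1,\ldots,u_m$ to the Singer basis $v_1,\ldots,v_m$, it decomposes $v$ as a sum of rank-one vectors and verifies directly that $\widetilde\phi(v)=\phi(v)^{(\tau,\tau)}$ for every $v$. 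What the paper's approach buys, and yours does not, is significant for the rest of the section: the equivalence $(\tau,\tau)$ lies in $\GL(\widehat V)\times\GL(\widehat V)$, hence preserves tensor rank and the Segre variety, and it intertwines $\phi$ and $\widetilde\phi$ pointwise; this is exactly what is invoked later to conclude that $\rank(\widetilde\nu(\widetilde\phi(v)))=\rank(\nu(\phi(v)))$ and to transport the geometric description (spread elements, Segre varieties, hyperreguli) between the two models. Your automorphism is non-canonical, need not preserve rank, and only matches the images as sets, so while it proves the proposition as stated, it would not support those downstream uses; what your approach buys in exchange is brevity and a reusable general principle about $\Fq$-forms of $\Fqm$-spaces.
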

\begin{proof}
Let $v=\sum_{i=1}^{m}{\alpha_i v_i}\in\widehat V$ be linear combination of $k$ vectors of rank 1, $1\le k\le m$. 

Let $\tau$ be the change of basis map of $\widehat V$ from the basis  $u_1,\ldots, u_m$ to the  Singer basis $v_1,\ldots, v_m$. 

Assume $k=1$, i.e. $v =  \lambda(\sum_{i=1}^{m}{a^{q^{i-1}}v_i})$, and set  $\lambda=\sum_{i=1}^{m}{l_iu_i}$,  $a=\sum_{i=1}^{m}{x_iu_i}$, with $l_{i},x_i\in\Fq$. Therefore, 
$v=\lambda\sum_{i=1}^{m}{x_iu_i}$ and
\[
\begin{array}{rcl}
\widetilde\phi(v) & =  & (\sum_{i=1}^{m}{\lambda^{q^{i-1}}v_i})\otimes (\sum_{i=1}^{m}{a^{q^{i-1}}v_i})\\[.03in]
  & = & (\sum_{i=1}^{m}{l_iu_i})^\tau\otimes (\sum_{i=1}^{m}{x_iu_i})^\tau\\[.03in]
  & = & [(\sum_{i=1}^{m}{l_iu_i}) \otimes (\sum_{i=1}^{m}{x_iu_i})]^{(\tau,\tau)}\\[.02in]
  & = & [\sum_{i=1}^{m}{l_ix_1u_{(i1)}}+\ldots+\sum_{i=1}^{m}{l_ix_mu_{(im)}}]^{(\tau,\tau)}\\[.03in]
  & = & \phi(v)^{(\tau,\tau)}.
    \end{array}
\]

Now assume $v  =  \lambda_1(\sum_{i=1}^{m}{a_1^{q^{i-1}}v_i})+\ldots+\lambda_k(\sum_{i=1}^{m}{a_k^{q^{i-1}}v_i})$, $k>1$. Set  $\lambda_j=\sum_{i=1}^{m}{l_{ij}u_i}$,  $a_j=\sum_{i=1}^{m}{x_{ij}u_i}$, with $l_{ij},x_{ij}\in\Fq$. Therefore, 
\[
v = \lambda_1(\sum_{i=1}^{m}{x_{i1}u_i})+\ldots+\lambda_k(\sum_{i=1}^{m}{x_{ik}u_i})=\sum_{i=1}^{m}{(\lambda_1x_{i1}+\ldots+\lambda_kx_{ik})u_i}
\]
giving $\phi(v)=\sum_{i=1}^{m}{(l_{i1}x_{11}+\ldots+l_{ik}x_{1k})u_{(i,1)}}+\ldots+\sum_{i=1}^{m}{(l_{i1}x_{m1}+\ldots+l_{ik}x_{mk})u_{(i,m)}}$.

On the other hand we have
\[
\begin{array}{rcl}
\widetilde\phi(v) &  = & (\sum_{i=1}^{m}{\lambda_1^{q^{i-1}}v_i})\otimes (\sum_{i=1}^{m}{a_1^{q^{i-1}}v_i})+\ldots+(\sum_{i=1}^{m}{\lambda_k^{q^{i-1}}v_i})\otimes (\sum_{i=1}^{m}{a_k^{q^{i-1}}v_i})\\[.03in]
& = & (\sum_{i=1}^{m}{l_{i1}u_i})^\tau\otimes (\sum_{i=1}^{m}{x_{i1}u_i})^\tau+\ldots+(\sum_{i=1}^{m}{l_{ik}u_i})^\tau\otimes (\sum_{i=1}^{m}{x_{ik}u_i})^\tau\\[.03in]
& = & [(\sum_{i=1}^{m}{l_{i1}u_i})\otimes (\sum_{i=1}^{m}{x_{i1}u_i})]^{(\tau,\tau)}+\ldots+[(\sum_{i=1}^{m}{l_{ik}u_i})\otimes (\sum_{i=1}^{m}{x_{ik}u_i})]^{(\tau,\tau)}\\[.03in]
& = & [\sum_{i=1}^{m}{l_{i1}x_{11}u_{(i1)}}+\ldots+\sum_{i=1}^{m}{l_{i1}x_{m1}u_{(im)}}]^{(\tau,\tau)}+\ldots\\[.03in]
& & +[\sum_{i=1}^{m}{l_{ik}x_{1k}u_{(i1)}}+\ldots +\sum_{i=1}^{m}{l_{ik}x_{mk}u_{(im)}}]^{(\tau,\tau)}\\[.03in]

& = & [\sum_{i=1}^{m}{(l_{i1}x_{11}+\ldots+l_{ik}x_{1k})}u_{(i1)}+\ldots +\sum_{i=1}^{m}{(l_{i1}x_{m1}+\ldots+l_{ik}x_{mk})}u_{(im)}]^{(\tau,\tau)}\\[.03in]
& = & \phi(v)^{(\tau,\tau)}.
\end{array}
\] 
\end{proof}

We call the map $\widetilde\phi$  the {\em field reduction of }  $\widehat V$ {\em over} $\Fq$ {\em with respect to the Singer  basis}  $v_1, \ldots, v_m$ and its image  the {\em  cyclic model for the field reduction of} $\widehat V$ over $\Fq$.  The projective space whose points are the $1$-dimensional  $\Fq-$subspaces generated by the elements of $\widetilde\phi(\widehat V)$ is  the {\em cyclic model for the field reduction} of $\PG(\widehat V)$ over $\Fq$.

Let $\widetilde\nu$  be the map defined by 
\[
\begin{array}{lccc}
\widetilde\nu=\nu_{\{v_1,\ldots, v_m\}}: & \widehat V\otimes \widehat V & \longrightarrow & M_{m,m}(\Fqm) \\
 & \sum_{i,j}{x_{ij}v_{(i,j)}}&   \longrightarrow  &   (x_{ij})_{i=1,\ldots,m}^{j=1,\ldots,m}.
 \end{array}
\]
Then, for any   $v=\alpha_1 v_1+\ldots+\alpha_{m} v_m\in  \widehat V$,  the matrix    $\widetilde \nu(\widetilde\phi(v))$ is  the Dickson matrix $D_{(\alpha_1,\ldots, \alpha_{m})}$. Since the cyclic model for the field reduction of $ \widehat V$ is obtained from the  field reduction   $\phi(\widehat V)$ by changing a basis in $\widehat V\otimes \widehat V$, we get that the rank of  $\widetilde\nu(\widetilde\phi(v))$ equals the rank of  $\nu(\phi(v))$, for any $v\in \widehat V$.


 In addition, the element  $k_v=\widetilde \phi(\<v\>)$ of the $m$-partition $\cK$ is
\[
k_{v}=\{\sum_{i=1}^{m}{(\lambda \alpha_1)^{q^{i-1}}v_{(i,i)}}+\ldots+\sum_{i=1}^{m}{(\lambda \alpha_{m})^{q^{i-1}}v_{(i,m-1+i)}}:\lambda\in\Fqm\}.
\]

\comment{
For any $\alpha\in\Fqm\setminus\{0\}$,  the linear transformation 
\begin{equation}\label{eq_50}
\alpha_1v_1+\alpha_2v_2+\alpha_3,\ldots \alpha_mv_m\mapsto \alpha_1v_1+\alpha \alpha_2v_2+\alpha^{1+q}\alpha_3v_3+\ldots+\alpha^{1+q+\ldots q^{m-1}} \alpha_mv_m
\end{equation}
  maps the set $\pi_a$, $a\neq 1$,  to $\pi_a$.
  }

 
In particular, $\bigcup_{v\in V\setminus\{0\}}{\widetilde\nu(k_v)}$ is the set of all rank 1 matrices in $\cD_m(\Fqm)$.

From  the arguments above,  we see that the set  $\cF_{m,q;I}$  can be considered,  via the isomorphism (\ref{eq_60}),  as the field reduction   of the set  $\cA_{m,q;I}$ with respect to the Singer basis $v_1,\ldots, v_m$.

As $[\pi_1]=[V]$, then the set $\cF_{\pi_1}=\widetilde\phi(\pi_1)$ defines the Segre variety  $\cS_{m,m}(\Fq)$
 of $\PG(V\otimes V)$ and  $\cF_{\pi_a}$ defines a Segre variety  projectively equivalent to $\cS_{m,m}(\Fq)$ under the element  of $\PGL(V\otimes V)$ corresponding to the linear transformation $\tau_\alpha$  with $N(\alpha)=a$. 
 
 \begin{remark}
 Note that, whenever $a\neq 1$, elements in $\cF_{\pi_a}$ have  rank bigger then 1 by Lemma \ref{prop_9}. This is explained by the fact that the linear transformation of $V\otimes V=V(m^2,q)$ corresponding to $\tau_\alpha$  is not in $\Aut_{\Fq}(V\otimes V)$.
 \end{remark}
 
Let  $W=\<v_1,v_m\>\subset \widehat V$.  Then $ \widetilde\phi(W)$  is a $2m$-dimensional vector subspace of $V\otimes V$. In $[\widetilde\phi(W)]$, the set  $[\widetilde\phi(J_1)]$ is  the Bruck norm-surface 
\[
\cN=\cN_{(-1)^m}=\{[\widetilde\phi(xv_1+yv_m)]:x,y \in \Fqm,\  N(y/x)=(-1)^m\}
\]
introduced in \cite{b1} and widely investigated in \cite{b2,b3} and recently in \cite{cz,lsz}. For any $x\in\Fqm\setminus\{0\}$ set $J_x=\{\lambda x v_1-\lambda x^{q^{m-1}}v_m:\lambda \in \Fqm\}$. 
  Then $ [\widetilde\phi(J_x)]\subset \cN$ and the set $\{[\widetilde\phi(J_x)]:x \in \Fqm\}$ is    a so-called {\em hyper-regulus} of  $\PG(\widetilde W)$ \cite{ost}. It turns out, that under the linear transformation $\tau_\alpha$ with $N(\alpha)=a$, also $J_a$ defines a hyper-regulus of  $[\widetilde\phi(W)]$.

The following result, which  summarizes all above arguments,  gives a geometric description of the MRD codes $\cF_{m,q;I}$.

\begin{theorem} \label{th_5}
Let $q>2$ be a prime power and $m>2$ a positive integer. Let  $I$  be any nonempty subset of $\Fq\setminus\{0,1\}$ with $k=|I|$. The projective image of the MRD code $\cF_{m,q;I}$  in $\PG(m^2-1,q)$ is a subset of a  Desarguesian spread  which is union of two spread elements, $k$ mutually disjoint Segre varieties $\cS_{m,m}(\Fq)$ and $q-1-k$ mutually disjoint hypereguli all contained in the $(2m-1)$-dimensional projective subspace generated by the two spread elements.
\end{theorem}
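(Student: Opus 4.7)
My strategy is to exploit the rank-preserving isomorphisms $\Omega \cong \cD_m(\Fqm) \cong V\otimes V$ (from Corollary~\ref{cor_3} and $\widetilde\nu$) to realise $\cF_{m,q;I}$ as the field-reduction image $\widetilde\phi(\cA_{m,q;I})$ in $\PG(V\otimes V)\cong \PG(m^2-1,q)$, and then to identify its projective image piece by piece via the decomposition
\[
\cA_{m,q;I}=\Pi_I\cup\Gamma_I\cup A_1\cup A_2\cup\{\0\}
\]
supplied by Theorem~\ref{th_1}.

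The first observation I need is that each of the five building blocks is a union of punctured $\Fqm$-lines of $\widehat V$: this is transparent from the parametric descriptions, where a free scalar $\lambda\in\Fqm\setminus\{0\}$ appears throughout. Under $\widetilde\phi$ every $\Fqm$-line is carried to one element of the Desarguesian spread $\cS$ of $\PG(V\otimes V)$, so the whole image $[\widetilde\phi(\cA_{m,q;I})]$ lies in the union of the spread elements of $\cS$. Since $A_1$ and $A_2$ are themselves single $\Fqm$-lines (namely $\Fqm v_1$ and $\Fqm v_m$), they contribute precisely the two spread elements $[k_{v_1}]$ and $[k_{v_m}]$ appearing in the statement.

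Next I would identify the remaining pieces using the material already assembled in the section. For each $a\in I$, Theorem~\ref{th_4} tells me that $[\pi_a]$ is an $(m-1)$-dimensional $\Fq$-subgeometry of $\PG(m-1,q^m)$; since $[\widetilde\phi(\pi_1)]=[\widetilde\phi(V)]$ is exactly the Segre variety $\cS_{m,m}(\Fq)$ (the projective image of the set of fundamental tensors), and the linear map $\tau_\alpha$ with $N(\alpha)=a$ sends $\pi_1$ to $\pi_a$, the image $[\widetilde\phi(\pi_a)]$ is a projectively equivalent Segre variety. For each $b\in\Fq\setminus(I\cup\{0\})$ the set $J_b$ sits inside the $\Fqm$-plane $W=\langle v_1,v_m\rangle$, so $[\widetilde\phi(J_b)]$ is automatically contained in the $(2m-1)$-dimensional projective subspace $[\widetilde\phi(W)]$, which is also the subspace generated by $[k_{v_1}]$ and $[k_{v_m}]$; its identification as a hyper-regulus on the Bruck norm surface $\cN$ is already given in the paragraph preceding the theorem via the $J_x$ family of ruling lines.

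Finally I would check mutual disjointness. Proposition~\ref{prop_13} and Lemma~\ref{prop9a} together with the explicit form of $A_1, A_2$ guarantee that the five blocks of $\cA_{m,q;I}$ are pairwise disjoint in $\widehat V\setminus\{0\}$, and distinct $\Fqm$-lines are carried by $\widetilde\phi$ to distinct spread elements, hence the corresponding projective images are mutually disjoint. The main bookkeeping obstacle I anticipate is confirming rigorously that $[\widetilde\phi(J_b)]$ really is a hyper-regulus in Ostrom's sense rather than just an arbitrary set of spread elements on $\cN$: this reduces to tracking the action of $\Fqm\setminus\{0\}$ on the parametrising vectors of $J_b$ and identifying the resulting $(q^m-1)/(q-1)$ spread elements with the ruling lines of $\cN$.
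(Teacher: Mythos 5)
Your proposal is correct and follows essentially the same route as the paper, which presents Theorem \ref{th_5} explicitly as a summary of the arguments developed throughout Section \ref{sec_4}: field reduction via the cyclic model $\widetilde\phi$, the Desarguesian spread arising from the $\Fqm$-lines, the Segre varieties $[\widetilde\phi(\pi_a)]$ obtained from $[\widetilde\phi(\pi_1)]=\cS_{m,m}(\Fq)$ via $\tau_\alpha$, the hyperreguli $[\widetilde\phi(J_b)]$ inside $[\widetilde\phi(W)]$ identified through the Bruck norm-surface, and the two spread elements from $A_1$ and $A_2$. The one item you flag as a remaining obstacle (that $[\widetilde\phi(J_b)]$ is genuinely a hyper-regulus) is handled in the paper exactly as you anticipate, by exhibiting the ruling family $\{[\widetilde\phi(J_x)]:x\in\Fqm\}$ on $\cN$ and transporting it by $\tau_\alpha$.
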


%
%
%
%

\section{The Cossidente-Marino-Pavese non-linear MRD code}\label{sec_5}

Recently, Cossidente, Marino and Pavese constructed  non-linear $(3,3,q;1)$-MRD codesin a totally geometric setting \cite[Theorem 3.6]{cmp}. 

In $\PG(2,q^3)$,  $q \ge 3$,  let  $\cC$ be the set of points whose coordinates satisfy the equation $X_1 X_2^q - X_3^{q+1}=0$, that
is a  {\em $C_F^1$-set} of $\PG(2,q^3)$ as introduced and studied in  \cite{dd}.  The set $\cC$ is the projective image of a subset  of $V(3,q^3)$ which is the union of $A_1$, $A_2'=\{(0,x,0):x \in\Fqthree\setminus\{0\}\}$  and the $q-1$ sets $ \gamma_a =\{(\lambda , \lambda x^{q+1}, \lambda x^{q}): \lambda,x\in\Fqthree\setminus\{0\}, N(x)=a\}$,  with $a$ a nonzero element of $\Fq$.  

For any nonzero $a\in\Fq$, let $\alpha\in\Fqthree$ with $N(\alpha)=a$ and  set $Z_a=\{(\lambda x,-\lambda \alpha x^q,0):\lambda,x\in\Fqthree\setminus\{0\}\}$. Let $I$ be any non-empty  subset  of $\Fq\setminus\{0,1\}$ and  put  
\[
\cA'(q;I)=\bigcup_{a\in I}{\gamma_a}\bigcup_{b\in \Fq \setminus (I\cup\{0\})}{Z_b}\cup A_1\cup A_2'\cup\{\0\}.
\]
Up to an endomorphism of $V\otimes V$ viewed as the vector space $V(9,q)$,  the image of set $\cA'(q;I)$ under $\nu\circ\phi$ is  a non-linear $(3,3,q;1)$-MRD code \cite[Proposition 3.8]{cmp}. 

\begin{lemma}\label{lem_8}
Let $\theta$ be the semilinear transformation of $V(3,q^3)$ defined by  
\[
\begin{array}{rccc}
\theta: & v_1 & \mapsto & v_3 \\
        & v_2 & \mapsto & v_1 \\
        & v_3 & \mapsto & v_2  
        \end{array} 
\]
with associated automorphism $x\mapsto x^{q^2}$.  Then  $\theta$ maps $\gamma_a$ into $ \pi_{a^{-1}}$ and $Z_a$ into $J_{a^{-1}}$, for any nonzero element $a$ of $\Fq$.  
\end{lemma}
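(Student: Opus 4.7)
The plan is a direct reparametrization argument. First I would record the coordinate action of $\theta$: since $\theta$ is semilinear with companion automorphism $x\mapsto x^{q^2}$ and sends $v_1\mapsto v_3,\ v_2\mapsto v_1,\ v_3\mapsto v_2$, applying it to $a_0 v_1 + a_1 v_2 + a_2 v_3$ yields $a_1^{q^2}v_1+a_2^{q^2}v_2+a_0^{q^2}v_3$, so on coordinate triples
\[
\theta:(a_0,a_1,a_2)\longmapsto (a_1^{q^2},\ a_2^{q^2},\ a_0^{q^2}).
\]

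For the first assertion, I would take a generic $(\lambda,\lambda x^{q+1},\lambda x^q)\in\gamma_a$ with $N(x)=a$, apply the above rule, and simplify using $x^{q^3}=x$ to obtain
\[
\theta(\lambda,\lambda x^{q+1},\lambda x^q)=(\lambda^{q^2}x^{1+q^2},\ \lambda^{q^2}x,\ \lambda^{q^2}).
\]
Then I would set $\mu:=\lambda^{q^2}$, $y:=x^{1+q^2}$ and $\alpha':=x^{-q}$. A short check (again using $x^{q^3}=x$ so that $y^q=x^{q+1}$ and $(\alpha')^{1+q}y^{q^2}=x^{-q-q^2}\cdot x^{q^2+q}=1$) shows that this triple equals $(\mu y,\ \mu\alpha' y^q,\ \mu(\alpha')^{1+q}y^{q^2})$, which is the generic element of $\pi_{\alpha'}$. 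Since $N(\alpha')=N(x^{-q})=N(x)^{-1}=a^{-1}$, Proposition~\ref{prop_13} gives $\pi_{\alpha'}=\pi_{a^{-1}}$, so $\theta(\gamma_a)\subseteq\pi_{a^{-1}}$.

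For the second assertion, a generic element of $Z_a$ has the form $(\lambda x,-\lambda\alpha x^q,0)$ with $N(\alpha)=a$, and
\[
\theta(\lambda x,-\lambda\alpha x^q,0)=(-\lambda^{q^2}\alpha^{q^2}x,\ 0,\ \lambda^{q^2}x^{q^2}).
\]
Setting $\mu:=-\lambda^{q^2}\alpha^{q^2}$, $y:=x$ and $\beta:=\alpha^{-q^2}$, this triple equals $(\mu y,\ 0,\ -\mu\beta y^{q^2})$, the generic form of an element of $J_\beta$. Since $N(\beta)=N(\alpha)^{-1}=a^{-1}$, Lemma~\ref{prop9a} yields $J_\beta=J_{a^{-1}}$, and hence $\theta(Z_a)\subseteq J_{a^{-1}}$.

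There is no genuine obstacle: the only real \emph{thought} is guessing the correct reparametrizations $\alpha'=x^{-q}$ and $\beta=\alpha^{-q^2}$, after which the identities reduce to routine bookkeeping of Frobenius exponents modulo $q^3-1$ (using $x^{q^3}=x$ and $\alpha^{q^3}=\alpha$). The appeal to Proposition~\ref{prop_13} and Lemma~\ref{prop9a} is what allows us to pass from a particular $\alpha'$ (respectively $\beta$) to the norm-indexed set $\pi_{a^{-1}}$ (respectively $J_{a^{-1}}$).
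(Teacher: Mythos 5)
Your proof is correct and follows essentially the same route as the paper: a direct coordinate computation of the $\theta$-image of a generic element, followed by the norm criteria of Proposition~\ref{prop_13} and Lemma~\ref{prop9a} to identify the image with the norm-indexed sets $\pi_{a^{-1}}$ and $J_{a^{-1}}$. The only cosmetic difference is that the paper first rewrites $\gamma_a$ via the substitution $x=\alpha t^{q-1}$ and then applies $\theta$, whereas you apply $\theta$ to the original parametrization and reparametrize the image afterwards.
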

\begin{proof}
Every element $x\in\Fqthree$ with $N(x)=a$ can be written as $x=\alpha t^{q-1}$ for some  $t\in\Fqthree$ and $\alpha$ a fixed element in  $\Fqthree$ such that $N(\alpha)=a$. By straightforward 
calculations, we can write $\gamma_a=\{(\lambda x,\lambda \alpha^{q+1} x^{q},\lambda \alpha^q x^{q^2}):\lambda,x\in\Fqthree\}$. 
Then, we get $\theta(\gamma_a)= \{(\lambda x,\lambda(\alpha^{-1})^{q^2} x^q, \lambda (\alpha^{-1})^{(q^2+1)} x^{q^2} ):\lambda,x\in\Fqthree\}=\pi_{a^{-1}}$ as $N(\alpha^{-q^2})=N(\alpha^{-1})=a^{-1}$. 

The last part of the statement follows from straightforward calculations.
\end{proof}

\begin{corollary}
Let $I$ be any  non-empty subset $I$ of $\Fq\setminus\{0,1\}$ and  put $I^{-1}=\{a^{-1}:a\in I\}$. Then, up to the endomorphism $\theta$ of $V(3,q^3)$ and the changing of basis in $V(3,q^3)\otimes V(3,q^3)$ from $u_{(i,j)}$ to $v_{(i,j)}$, the Cossidente-Marino-Pavese family of non-linear MRD codes is the set $\cF_{3,q,I^{-1}}$.   
\end{corollary}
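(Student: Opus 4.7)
The plan is to reduce the statement to Lemma~\ref{lem_8} plus a direct inspection of the pieces $A_1$ and $A_2'$, and then to pass from the vector picture in $\widehat V=V(3,q^3)$ to the matrix picture in $V(9,q)$. Concretely, I will first show that the semilinear map $\theta$ of $\widehat V$ sends the set $\cA'(q;I)$ bijectively onto $\cA_{3,q;I^{-1}}$, and afterwards observe that the CMP code and $\cF_{3,q;I^{-1}}$ are obtained from these two sets by field reductions that differ only by the change of tensor basis $u_{(i,j)}\to v_{(i,j)}$ in $V\otimes V$.

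For the vector-level equality, Lemma~\ref{lem_8} gives $\theta(\gamma_a)=\pi_{a^{-1}}$ and $\theta(Z_a)=J_{a^{-1}}$ for every nonzero $a\in\Fq$. Since inversion is a bijection of $\Fq\setminus\{0\}$ fixing $1$, it sends $I\subseteq\Fq\setminus\{0,1\}$ onto $I^{-1}$ and $\Fq\setminus(I\cup\{0\})$ onto $\Fq\setminus(I^{-1}\cup\{0\})$, so the $\gamma$-part of $\cA'(q;I)$ goes to $\Pi_{I^{-1}}$ and its $Z$-part goes to $\Gamma_{I^{-1}}$. For the remaining pieces a one-line computation using the defining action of $\theta$ gives $\theta(xv_1)=x^{q^2}v_3$ and $\theta(xv_2)=x^{q^2}v_1$, whence $\theta(A_1)=A_2$ and $\theta(A_2')=A_1$. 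Together with $\theta(\0)=\0$ this yields $\theta(\cA'(q;I))=\cA_{3,q;I^{-1}}$.

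For the passage to MRD codes, the CMP code is $\nu\circ\phi$ applied to $\cA'(q;I)$, where $\phi$ is the field reduction with respect to the basis $u_1,u_2,u_3$ of $\widehat V$ and $\nu$ reads off the coordinates in the tensor basis $u_{(i,j)}$. By Corollary~\ref{cor_3} together with the identification given before Theorem~\ref{th_5}, the code $\cF_{3,q;I^{-1}}$ is the image of $\cA_{3,q;I^{-1}}$ under the analogous construction $\widetilde\nu\circ\widetilde\phi$ built from the Singer basis $v_1,v_2,v_3$ of $\widehat V$ and the tensor basis $v_{(i,j)}$ of $V\otimes V$. The identity $\widetilde\phi(v)=\phi(v)^{(\tau,\tau)}$, proved earlier in the section, says that the two field reductions agree up to the simultaneous basis change $u_{(i,j)}\to v_{(i,j)}$, which is precisely the change of basis allowed in the statement; combining this with the previous paragraph finishes the proof.

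The only delicate step is the bookkeeping of the inversion map on $\Fq\setminus\{0\}$ and of the swap $A_1\leftrightarrow A_2'$, but both are immediate once one writes things out; there is no genuine obstacle, since the substantive content has already been absorbed into Lemma~\ref{lem_8}.
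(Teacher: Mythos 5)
Your proof is correct and follows exactly the route the paper intends: the corollary is stated without proof as an immediate consequence of Lemma~\ref{lem_8}, and you have simply filled in the routine details (inversion permuting $\Fq\setminus\{0\}$ and fixing $1$, the swap $\theta(A_1)=A_2$, $\theta(A_2')=A_1$, and the identification of the two field reductions via $\widetilde\phi(v)=\phi(v)^{(\tau,\tau)}$). Nothing further is needed.
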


Let $L$ be any line of $\PG(2,q^3)$ disjoint from a subgeometry $\PG(2,q)$.  The set of points of $L$ that lie on some proper subspace spanned by points of $\PG(2,q)$ is called the {\em exterior splash} of $\PG(2,q)$ on  $L$ \cite{lz}. 

\begin{proposition}\cite{cz}
The exterior splash of the subgeometry $[\pi_a]$ on the line $[W]$ is the set 
$[J_b]$ with 
$b=a^{m-1}$. 
\end{proposition}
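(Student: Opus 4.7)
The plan is to use the Singer cycle $\sigma$ of $V$, whose matrix in the Singer basis is $\diag(\mu,\mu^q,\mu^{q^2})$ for $\mu$ a generator of $\Fqm^*$, as a symmetry of both the splash and $[J_{a^{m-1}}]$, reducing the problem to matching a single point.

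First, I would verify $\sigma$-invariance. From the identity
\[
\sigma(\lambda x,\lambda\alpha x^q,\lambda\alpha^{q+1}x^{q^2}) = (\lambda\mu x,\,\lambda\alpha(\mu x)^q,\,\lambda\alpha^{q+1}(\mu x)^{q^2}),
\]
the cycle $\sigma$ preserves $[\pi_a]$, acting on the parameter by $x\mapsto\mu x$; it also preserves $[W]=[\langle v_1,v_m\rangle]$ since it scales $v_1$ and $v_m$ diagonally. Consequently $\sigma$ permutes the secant $\Fqm$-lines of $[\pi_a]$ and their intersections with $[W]$, so the splash is $\sigma$-stable. The set $[J_{a^{m-1}}]$ is directly $\sigma$-stable from its defining formula.

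Next I would analyze $\sigma|_{[W]}$: in the affine coordinate $y/x$ it acts by multiplication by $\mu^{q^{m-1}-1}$, whose order in $\Fqm^*$ is $(q^m-1)/(q-1)=q^2+q+1$. Its only fixed points on $[W]$ are $[v_1]$ and $[v_m]$. A short direct check shows that neither lies on any $\Fq$-secant of $[\pi_a]$ (a secant through $[v_1]$ would force $x^{q^2-q}=y^{q^2-q}$, hence $x/y\in\Fq^*$, contradicting distinctness of the two $[\pi_a]$-points). Therefore both the splash and $[J_{a^{m-1}}]$ lie inside $[W]\setminus\{[v_1],[v_m]\}$, which $\sigma$ partitions into exactly $q-1$ orbits of length $q^2+q+1$. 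Since $[J_{a^{m-1}}]$ has cardinality $q^2+q+1$, it is a single full $\sigma$-orbit. The splash is nonempty, $\sigma$-invariant, and has at most $q^2+q+1$ points (one per $\Fq$-line of the Baer subplane $[\pi_a]$, each giving a unique intersection with $[W]$); being a union of full orbits of size $q^2+q+1$, it is also exactly one orbit.

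Finally I would exhibit a common point. For distinct points $[P_x],[P_y]$ of $[\pi_a]$, setting the $v_2$-coordinate of $sP_x+tP_y$ to zero yields $(s,t)=(y^q,-x^q)$ and the intersection representative
\[
(xy^q-x^qy,\;0,\;\alpha^{q+1}(x^{q^2}y^q-x^qy^{q^2})).
\]
The Frobenius identity $x^{q^2}y^q-x^qy^{q^2}=-(xy^q-x^qy)^q$ normalizes this to $[1,0,-\alpha^{q+1}z^{q-1}]$ with $z=xy^q-x^qy\in\Fqm^*$. Since $N(z^{q-1})=1$ and $N(\alpha^{q+1})=N(\alpha)^{q+1}=a^{q+1}=a^{m-1}$ (using $a\in\Fq$ and $m=3$), this point lies in the coset $\{[1,0,-\alpha'w]:w\in\ker N\}=[J_{a^{m-1}}]$. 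Hence splash and $[J_{a^{m-1}}]$ share a point and, as equal single $\sigma$-orbits, coincide. The main obstacle is the structural verification that both sets are single full $\sigma$-orbits — requiring the order computation for $\mu^{q^{m-1}-1}$ and the count of $\Fq$-secants — after which identification of one common point is a direct Frobenius manipulation together with the norm identity $a^{q+1}=a^{m-1}$.
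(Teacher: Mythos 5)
Your strategy --- use the Singer cycle to show that both the splash and $[J_{a^{m-1}}]$ are single $\sigma$-orbits of the same length on $[W]\setminus\{[v_1],[v_m]\}$, then exhibit one common point --- is sound and is actually close in spirit to the paper's own proof, which likewise rests on the transitivity of $\langle\sigma\rangle$ (there, on the hyperplanes spanned by points of the subgeometry) together with one explicit intersection computation, followed by transport via $\tau_\alpha$. Your orbit-length bookkeeping makes precise the step the paper dispatches with ``by looking at the action of $S$ on $W$,'' and handling general $a$ directly through the norm identity instead of reducing to $a=1$ is a legitimate variation.

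The genuine gap is that your argument, as written, proves only the case $m=3$, whereas the proposition (note the exponent $b=a^{m-1}$) is stated for all $m\ge 3$. The specialization is not cosmetic: the identity $(q^m-1)/(q-1)=q^2+q+1$, the description of $[\pi_a]$ as a subplane whose relevant spans are secant \emph{lines}, and the closing computation with triples $(xy^q-x^qy,\,0,\,\alpha^{q+1}(x^{q^2}y^q-x^qy^{q^2}))$ together with $a^{q+1}=a^{m-1}$ all require $m=3$. More importantly, for $m>3$ the $\Fqm$-extension of a secant of $[\pi_a]$ is a line of $\PG(m-1,q^m)$ and is in general disjoint from the line $[W]$, so the splash cannot be captured by secants at all: it is swept out by the $(m-2)$-dimensional spans of hyperplanes of the subgeometry. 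To repair the general case, replace the secant computation by the hyperplane one, as the paper does: a hyperplane of the subgeometry $\pi_1$ extends to the hyperplane $\sum_{i=1}^{m}\gamma^{q^{i-1}}X_i=0$ of $\widehat V$, which meets $W$ in the point $[v_1-\gamma^{1-q^{m-1}}v_m]\in[J_1]$ since $N(\gamma^{1-q^{m-1}})=1$; your $\sigma$-orbit rigidity argument then carries over verbatim with orbit length $(q^m-1)/(q-1)$, and $\tau_\alpha$ transports $[\pi_1]$, $[J_1]$ to $[\pi_a]$, $[J_{a^{m-1}}]$.
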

\begin{proof}
First we note that $[W]$ is disjoint from  $ [\pi_1]$.
The $\Fqm$-span of some hyperplane in  the cyclic model of $V$ is a hyperplane of $\widehat V$ with  equation $\sum_{i=1}^{m}{\alpha^{q^{i-1}}X_{i}}=0$, for some nonzero$\alpha\in\Fqm$. As the Singer cycle $\sigma$ acts  on the hyperplanes of $V$ by mapping the hyperplane  with equation  $\sum_{i=1}^{m}{\alpha^{q^{i-1}}X^{q^{i-1}}}=0$ to the hyperplane with equation $\sum_{i=1}^{m}{(\mu\alpha)^{q^{i-1}}X^{q^{i-1}}}=0$, then $\sigma$ maps the hyperplane  of $\widehat V$ with equation $\sum_{i=1}^{m}{\alpha^{q^{i-1}}X_{i}}=0$ into the hyperplane with equation $\sum_{i=1}^{m}{(\mu\alpha)^{q^{i-1}}X_i}=0$. Note that  $\sigma$ fixes  $W$. 

The hyperplane $\sum_{i=1}^{m}{X_i}=0$ of $\widehat V$ meets  $W$ in the $\Fqm$-subspace spanned by $v_1-v_m$. By looking at  the action of the Singer cyclic group $S=\<\sigma\>$ on $W$,  we see that the exterior splash of $ [\pi_1]$ on $[W]$ is the set  $[J_1]$.
By using he map $\tau_\alpha$ defined above with $N(\alpha)=a$, we get the result.
\end{proof}

\begin{remark}
Let $U$ be the $\Fqm$-span of $v_1$ and $v_2$ in $\widehat V$. It is evident that the semilinear transformation $\theta$ maps the exterior splash of $[\gamma_a]$ on $[U]$ into the exterior splash of $[\pi_{a^{-1}}]$ on $[W]$.  

 The exterior splash of $[\gamma_a]$ on $[U]$ is 
\[
[\gamma_a] =\{[(1, x,0)]: x\in\Fqthree, N (x)=-a^2\}.
\]

In $\cite{cmp}$, the splash of $[\gamma_a]$ was erroneusly given as the set $[Z_a]$. Note that,  $[Z_a]$   never coincides with $[\gamma_a]$, unless $a=1$.
\end{remark}

\end{document}